\newtheorem{theorem}{Theorem}
\newtheorem{lemma}[theorem]{Lemma}
\newtheorem{proposition}[theorem]{Proposition}
\newtheorem{corollary}[theorem]{Corollary}
\newtheorem{definition}[theorem]{Definition}
\newtheorem{example}[theorem]{Example}
\newtheorem{remark}[theorem]{Remark}
\newcommand{\E}{\mathbb{E}}
\newcommand{\Var}{\text{{\rm{Var}}}}
\newcommand{\B}{\{0,1\}}
\begin{document}
\title{Deterministic Randomness Extraction from Generalized and Distributed Santha-Vazirani Sources}
\author{Salman Beigi$^1$, Omid Etesami$^1$, and Amin Gohari$^{1,2}$
\\ \emph{\small $^1$School of Mathematics, Institute for Research in Fundamental Sciences (IPM), Tehran, Iran}\\
\emph{\small $^2$Department of Electrical Engineering, Sharif University of Technology, Tehran, Iran}}
\maketitle

\begin{abstract}
A Santha-Vazirani (SV) source is a sequence of random bits where the conditional distribution of each bit, given the previous bits, can be partially controlled by an adversary.  Santha and Vazirani show that deterministic randomness extraction from these sources is impossible.
In this paper, we study the generalization of SV sources for non-binary sequences. We show that unlike the binary case, deterministic randomness extraction in the generalized case is sometimes possible. We present a necessary condition and a sufficient condition for the possibility of deterministic randomness extraction. These two conditions coincide in ``non-degenerate" cases.

Next, we turn to a distributed setting.
In this setting the SV source consists of a random sequence of pairs $(a_1, b_1), (a_2, b_2), \ldots$ distributed between two parties, where the first party receives $a_i$'s and the second one receives $b_i$'s.
The goal of the two parties is to extract  common randomness without communication. Using the notion of \emph{maximal correlation}, we prove a necessary condition and a sufficient condition for the possibility of common randomness extraction from these sources. Based on these two conditions, the problem of common randomness extraction essentially reduces to the problem of randomness extraction from (non-distributed) SV sources. This result generalizes results of G\'acs and K\"orner, and Witsenhausen about common randomness extraction from i.i.d.\ sources to adversarial sources.

\end{abstract}


\section{Introduction}

Randomized algorithms are simpler and more efficient than their deterministic counterparts in many applications.
In some settings such as communication complexity and distributed computing, it is even possible to prove unconditionally that allowing randomness improves
the efficiency of algorithms (see e.g.,~\cite{Yao79, Rabin, FischerLynch}).
However, access to sources of randomness (especially common randomness) may be limited, or the quality of randomness in the source may be far from perfect. Having such an imperfect source of randomness, one may be able to extract (almost) unbiased and independent random bits using \emph{randomness extractors}. A randomness extractor is a function applied to an imperfect source of randomness whose outcome is an almost perfect source of randomness.

The problem of randomness extraction from imperfect sources of randomness was perhaps first considered by Von Neumann \cite{vonNeumann}.
A later important work in this area
is \cite{SanthaVazirani} where Santha and Vazirani introduced the imperfect sources of randomness
now often called Santha-Vazirani (SV) sources.
These sources can easily be defined in terms of an adversary with two coins.
Consider an adversary who has two different coins, one of which is biased towards heads (e.g.,\ $\Pr(\text{heads}) = 2/3$)
and the other one is biased towards tails (e.g.,\ $\Pr(\text{heads}) = 1/3$).
The adversary, in each time step, chooses one of the two coins and tosses it.
Adversary's choice of coin may depend (probabilistically) on the previous outcomes of the tosses.
The sequence of random outcomes of these coin tosses is called a SV source.

Santha and Vazirani \cite{SanthaVazirani} show that randomness extraction from the above sources through a deterministic method is impossible. More precisely, they show that for every deterministic way of extracting one random bit, there is a strategy for the adversary
such that the extracted bit is biased, or more specifically,
the extracted bit is 0 with probability either $\ge 2/3$ or $\le 1/3$.
Subsequently, other proofs for this result have been found (see e.g.,~\cite{RVW, ACMPS}).  In Appendix \ref{appendixC}, we give a more refined version of this result, which provides a more detailed picture of the limits of what the adversary can achieve.

Despite this negative result, such imperfect sources of randomness are enough for many applications.
For example, as shown by Vazirani and Vazirani \cite{VV1,VV2}, randomized polynomial-time algorithms that use perfect random bits
can be simulated using SV sources. This fact can also be verified using the fact that the min-entropy of SV sources is linear in the size of the source (where min-entropy, in the context of extractors, was first introduced by \cite{CG}). Indeed, by the  later theory of randomness extraction (e.g.,~see~\cite{Zuckerman}), it is possible to efficiently extract polynomially many almost random bits
from such sources with high min-entropy if we are, in addition to the imperfect source, endowed with a perfectly random seed of logarithmic length. (In fact, for the special case of SV sources, a seed of constant length is enough~\cite[Problem 6.6]{Vadhan}). For the application of randomized polynomial-time algorithms where a logarithmic-length random seed is not available,
we can enumerate in polynomial time over all the logarithmic-length seeds;
for each choice of the seed, we apply the randomness extractor with the given seed and use its outcome bits (that are not truly random) in the algorithm.
Finally, we take a majority over the outputs of the algorithm for different choices of the seed.

Enumerating over all seeds may be inefficient for some applications,
or does not work at all, e.g., in interactive proofs and one-shot scenarios such as cryptography.
Therefore, it is natural to ask whether deterministic randomness extraction from imperfect sources of randomness is possible.
For most applications, it is also necessary to require that the extractor be explicit, i.e., extraction can be done efficiently (in polynomial time).
Previous to this work, explicit deterministic extractors had been constructed for many different classes of sources, including i.i.d.\ bits with unknown bias \cite{vonNeumann}, Markov chains \cite{Blum}, affine sources \cite{Bourgain,GR}, polynomial sources  \cite{DGW,Dvir}, and
sources consisting of independent blocks  \cite{Bourgain2}.

\vspace{.2in}
\noindent
{\bf Deterministic extractors for generalized SV sources.}
Although~\cite{SanthaVazirani} proves the
impossibility of deterministic randomness extraction from SV sources,
this impossibility is shown only for binary sources.
In this paper we show that if we consider a generalization of SV sources over \emph{non-binary} alphabets,
deterministic randomness extraction is indeed possible under certain conditions.

To generalize SV sources over non-binary alphabets,
we assume that the adversary, instead of coins, has some multi-faceted (say 6-sided) dice.
The numbers written on the faces of different dice are the same,
but each die may have a different probability for a given face value.
The adversary throws these dice $n$ times,
each time choosing a die to throw depending on the results of the previous throws. Again, the outcome is an imperfect source of randomness, for which we may ask whether deterministic randomness extraction is possible or not.

When the dice are non-degenerate, i.e., all faces of all dice have non-zero probability,
we give a necessary and sufficient condition for the existence of
a deterministic strategy for extracting one bit with arbitrarily small bias.
For example, when the dice are 6-sided,
the necessary and sufficient condition implies that we can deterministically extract an almost unbiased bit when the adversary has access to any arbitrary set of five non-degenerate dice, but randomness extraction is not possible in general when the adversary has access to six non-degenerate 6-sided dice.
More precisely, a set of non-degenerate dice leads to extractable generalized SV sources
if and only if
the convex hull of the set of probability distributions associated with the set of dice
does not have full dimension in the ``probability simplex".
We emphasize that when we prove the possibility of deterministic extraction,
we also provide an explicit extractor.

\vspace{.2in}
\noindent
{\bf Relation to block-sources.}
The generalized SV sources considered in this paper 
are also a generalization of ``block-sources" defined by Chor and Goldreich \cite{CG},
where the source is divided into several blocks such that each block has min-entropy at least $k$ conditioned 
on the value of the previous blocks.
Such a block-source can be thought 
as a generalized SV source where the adversary can generate each block (given previous blocks)
using any ``flat" distribution with support $2^k$.
Being a special case of generalized SV sources (defined here),
block-sources have another difference as well:
Since it is impossible to extract from a single block-source deterministically,
the common results regarding extraction from block-sources are about either seeded extractors (e.g.\ \cite{NZ}) or extraction from at least two independent block-sources (e.g.\ \cite{Rao}).

\vspace{.2in}
\noindent
{\bf Common randomness extractors.}
Common random bits, shared by distinct parties, constitute an important resource for distributed algorithms; common random bits can be used by the parties to synchronize the randomness of their local actions.
We may ask the question of randomness extraction in this setting too.
Assuming that the parties are provided with an imperfect source of common randomness, the question is whether perfect common randomness can be extracted from this source or not.

G\'acs and K\"orner~\cite{GacsKorner} and Witsenhausen~\cite{Witsenhausen} have looked at the problem of extraction of common random bits from a very special class of imperfect sources, namely i.i.d.\ sources. In this case, the \emph{bipartite} source available to the parties is generated as follows: In each time step, a pair $(A,B)$ with some predetermined distribution (known by the two parties) and independent of the past is generated; $A$ is revealed to the first party and $B$ is revealed to the second party. After receiving arbitrarily many repetitions of random variables $A$ and $B$, the two parties aim to extract a common random bit.  It is known that in this case, the two parties (who are not allowed to communicate) can generate a common random bit if and only if $A$ and $B$ have a common data~\cite{Witsenhausen}. This means that common randomness generation is possible if $A$ and $B$ can be expressed as $A = (A', C)$ and $B = (B', C)$ for a nonconstant common part $C$, i.e., there are nonconstant functions $f, g$ such that $C=f(A)=g(B)$.
 Observe that when a common part exists, common randomness can be extracted by the parties by applying the same extractor on the sequence of $C$'s. That is, the problem of common randomness extraction in the i.i.d.\ case is reduced to the problem of ordinary randomness extraction.
These results are obtained using a measure of correlation called~\emph{maximal correlation}.
The key feature of this measure of correlation that helps proving the above result is the \emph{tensorization property}, i.e., the maximal correlation between random variables $A$ and $B$ is equal to that of $A^n$ and $B^n$ for any $n$, where $A^n$ and $B^n$ denote $n$ i.i.d.\ repetitions of $A$ and $B$.

In this paper we  consider the problem of common randomness extraction from \emph{distributed SV sources} defined as follows.
In a distributed SV source, the adversary again has some multi-faceted dice,
but here, instead of a single number, a pair of numbers $(A, B)$ is written on each face.
As before, the set of values written on the faces of the dice is the same,
but the probabilities of face values may differ in different dice.
In each time step, the adversary depending on the results of the previous throws, picks a die and throws it.
If $(A, B)$ is the result of the throw, $A$ is given to the first party and $B$ to second party.
Thus, the two parties will observe random variables $A$ and $B$ whose joint distribution depends on the choice of die by the adversary.

Again consider the non-degenerate case
where all faces on all the dice of the adversary have positive probability.
We show that in this case, we can extract a common random bit from the distributed SV source
if and only if
it is possible to extract randomness from the common part of $A$ and $B$. That is, similar to the i.i.d.\ case, the problem of common randomness extraction from distributed SV sources is reduced to the problem of randomness extraction from non-binary generalized SV sources. Since by our results, we know when randomness extraction from generalized SV sources is possible, we obtain a  complete answer to the problem in the distributed case too.

This relation between the problem of common randomness extraction and the problem of randomness extraction from the common part holds in more general settings.
For example, it resolves the problem of common randomness extraction
from the following interesting distributed SV source.

\vspace{.2in}
\noindent
{\bf Example.} A concrete example of a distributed SV source is as follows. Let us start with the original source considered by Santha and Vazirani with two coins. Assume that the adversary chooses coin $S\in\{1,2\}$ (where coin 1 is biased towards heads and coin 2 is biased towards tails) and let the outcome of the throw of the coin be denoted by random variable $C$. The first party, Alice, is assumed to observe both the identity of the coin chosen by the adversary, i.e., $S$, and the outcome of the coin, which is $C$. The second party, Bob, observes the outcome of the coin $C$, but only gets to see the choice of the adversary with probability 0.99. That is, Bob gets $B=(C, \tilde S)$ where $\tilde S$ is the result of passing $S$ through a binary erasure channel with erasure probability 0.01. Here the common part of $A=(C,S)$ and $B=(C,\tilde S)$ is just $C$. Our result (Theorem~\ref{thm:main-distributed}) then implies that Alice and Bob cannot benefit from their knowledge of the actions of adversary, and should only consider the $C$ sequence. But then from the result of \cite{SanthaVazirani}, we can conclude that common random bit extraction is impossible in this example.

\vspace{.2in}
\noindent
{\bf Proof techniques.}
We briefly explain the techniques used in the proof of the above results.

Consider a deterministic randomness extractor that extracts one bit from a generalized SV source.
We can view this extractor as labeling the leaves of a rooted tree with zeros and ones.
Each sequence of dice throws corresponds to a path from the root to one of the leaves,
and at each node, the adversary has some limited control of which branch to take while moving from the root towards the leaves.
To prove the impossibility of randomness extraction, we need to show that either the minimum or the maximum of the probability of the output bit being zero, over all adversary's strategies, is far from $1/2$.
Our idea is to track these maximum and minimum probabilities in a recursive way,
i.e., to find these probabilities for any node of the tree in terms of these values for its children.  We then by induction show that for each node of the tree either the minimum probability or the maximum probability is far from $1/2$.

To be more precise, given a deterministic extractor, let $\alpha$ be the minimum probability of output bit being zero (over all strategies of the adversary). Similarly, let  $\beta$ be the maximum probability of output bit being zero (over all strategies of the adversary).
Then we show that under certain conditions, there exists a \emph{continuous} function $g(\cdot)$ on the interval $[0,1]$, such that $\beta\geq g(\alpha)$ and furthermore $g(1/2)> 1/2$. We prove $\beta\geq g(\alpha)$ inductively using the tree structure discussed above.
 This implies the desired impossibility result, as by the continuity of $g(\cdot)$, both $\alpha$ and $\beta$ cannot be close to $1/2$. For instance, for the binary SV source with two coins having  probability of heads respectively equal to $1/3$ and $2/3$, Figure~\ref{Fig1} shows a curve where $(\alpha, \beta)$ always lies above it.
 This curve is clearly isolated from $(1/2, 1/2)$.

\begin{figure*}[t]
\centering
\includegraphics[scale=0.4,angle=0]{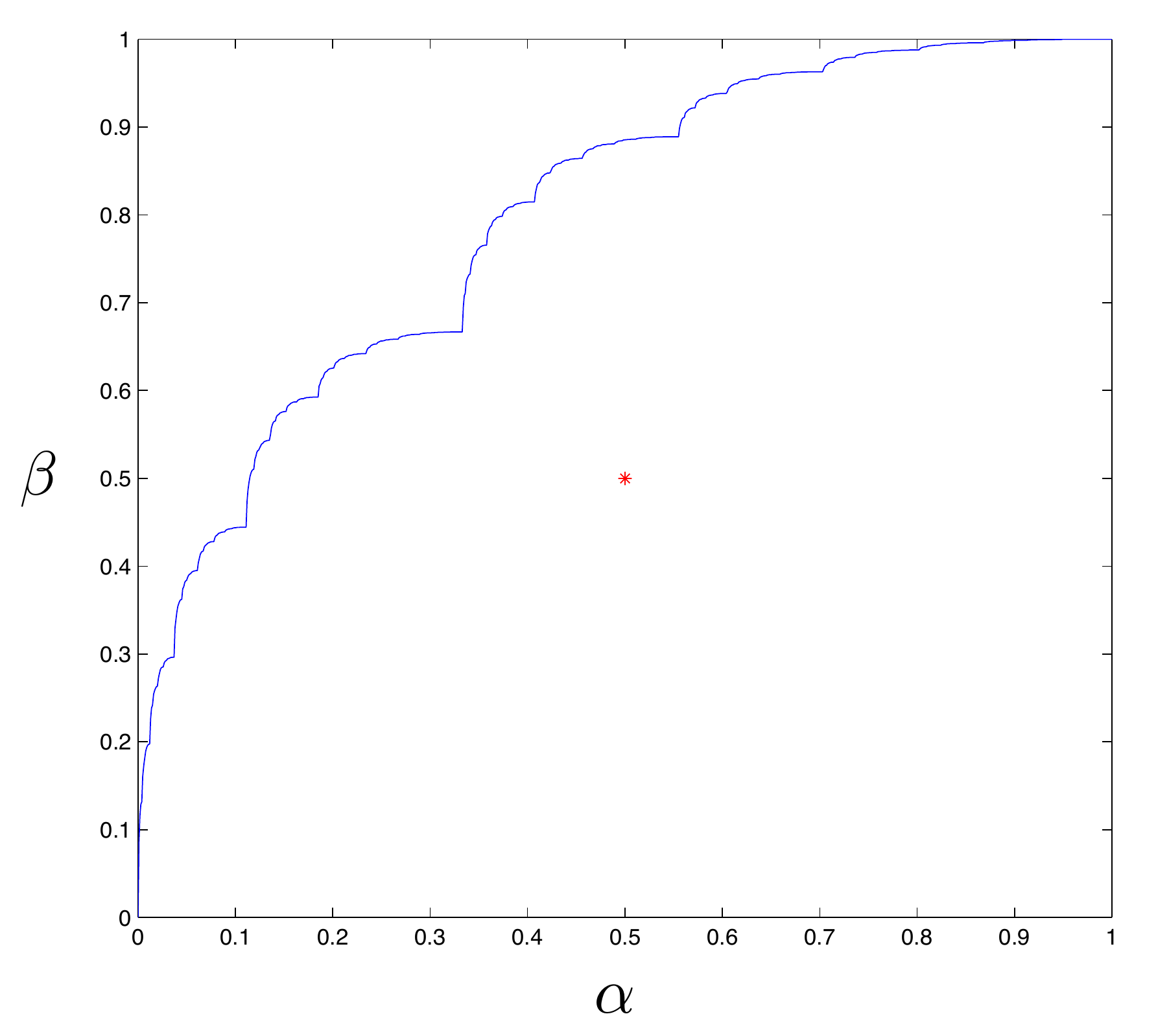}
\caption{Given any deterministic extractor, the pair $(\alpha, \beta)$ is above the curve specified in this figure, where
$\alpha$ and $\beta$ are the minimum and maximum value of probability of the output being zero that the adversary can achieve by choosing its strategy. The plot is for the binary SV source with two coins with probability of heads respectively equal to 1/3 and 2/3. The point $(1/2, 1/2)$ is specified by a red star in the figure.
To see how the curve is obtained, see Appendix~\ref{appendixC}, Corollary~\ref{thm:16}.}
\label{Fig1}
\end{figure*}

We follow similar ideas for proving our impossibility result for common randomness extraction from a distributed SV source; again we construct a continuous function, which somehow captures not only the minimum and maximum of the probability of the extracted common bit being zero, but also
the probability that the two parties agree on their extracted bits. The construction of this function is more involved in the distributed case; it has two terms one of which
is similar to the function in the non-distributed case, and the other is inspired by the definition of maximal correlation mentioned above.

To show the possibility of deterministic randomness extraction under certain conditions,
we try to use linear relations among the probability vectors associated with different dice
in order to define a martingale with anti-concentration properties,
and use the theory of stopping times for martingales and submartingales.

\vspace{.2in}
\noindent
{\bf Contributions to information theory.}
As mentioned above, the problem of common randomness extraction from i.i.d.\ sources has been studied in the information theory community. Then our work provides a generalization and an alternative proof of known results in the i.i.d.\ case.  In particular, we give a new proof of  Witsenhausen's result~\cite{Witsenhausen} on the impossibility of common randomness extraction from certain i.i.d.\ sources.

We also would like to point out that a generalized SV source as we define, is indeed an arbitrarily varying source (AVS)~\cite{Dobrusin1, Dobrusin2} with a causal adversary.  These sources are studied in the information theory literature from the point of view of source coding~\cite{Berger}.


\vspace{.2in}
\noindent
{\bf Notations.} In the rest of this section we fix some notations that will be used. The main results of the paper are discussed in the following two sections.

Probability spaces in this paper are all over finite sets which are denoted by calligraphic letters such as $\mathcal C$.  So a probability distribution over $\mathcal C$ is determined by numbers $p(c)$ for $c\in \mathcal C$. The random choice of $c\in \mathcal C$ with this distribution is denoted by $C$, i.e., $C=c$ with probability $p(c)$.

In this paper we also consider functions $X:\mathcal C\rightarrow \mathbb R$. Such a function can be thought of as a random variable $X=X(C)$.  We sometimes for simplicity use the notation $X(c)=x_c$.
The constant function $X:\mathcal C\rightarrow \mathbb R$ with $X(c)=1$ is denoted by $\mathbf 1_C$.
The expected value and variance of $X$ are denoted by $\E[X]$ and $\Var[X]$ respectively.
Given two such functions $X, Y:\mathcal C \rightarrow \mathbb R$ we define their inner product by
$$\langle X, Y\rangle:=\E[XY].$$
This inner product induces the norm
$\|X\|:=\langle X, X\rangle^{1/2}=\E[X^2]^{1/2}.$

We sometimes have several distributions over the same set $\mathcal C$ which are indexed by elements $s\in \mathcal S$. In this case to avoid confusions, the expectation value, variance, inner product, and norm are specified by a subscript, i.e., respectively by $\E_{(s)}, \Var_{(s)},$ $\langle\cdot , \cdot \rangle_{(s)}$, and $\|\cdot\|_{(s)}$. The \emph{uniform distribution} is specified by a \emph{star} subscript, i.e,
$$\E_*[X]=\frac{1}{|\mathcal C|} \sum_c x_c,$$
and
$\langle X, Y\rangle_*:= \E_*[XY]$, and $\|X\|_*=\E_*[X^2]^{1/2}$.

For simplicity of notation a sequence $C_1, \dots, C_n$ of (not necessarily i.i.d.)\ random variables is denoted by $C^n$. Similarly for $c_1,\dots, c_n\in \mathcal C$ we use $c^n=(c_1, \dots, c_n)$. We also use the notation
$c_{[k: k+\ell]} = (c_k, c_{k+1},\dots, c_{k+\ell})$.

\section{Randomness extraction from generalized SV sources}\label{sec:extraction-sv}
\begin{definition}[Generalized SV source]
Let $\mathcal C$ be a finite alphabet set.
Consider a \emph{finite} set of distributions over $\mathcal C$ indexed by a set $\mathcal S$. That is, assume that for any $s \in \mathcal S$ we have a distribution over $\mathcal C$ determined by numbers $p_s(c)$ for  all $c\in \mathcal C$.
A sequence $C_1, C_2, \cdots$ of random variables, each over alphabet set $\mathcal C$,
is said to be a \emph{generalized SV source} with respect to distributions $p_s(c)$, if the sequence is generated as follows:
Assume that $C_1, \ldots, C_{i-1}$ are already generated.
In order to determine $C_i$, an adversary chooses $S_i = s_i \in \mathcal S$, depending only\footnote{We can allow for the adversary to choose $s_i$ depending both on $C_1, \ldots, C_{i-1}$ and on $S_1, \ldots, S_{i-1}$, but this relaxation is not important, since it is only the marginal distribution of $p(c_1, c_2, \cdots, c_n)$ that matters to us.} on $C_1, \ldots, C_{i-1}$.  Then $C_i$ is sampled from the distribution $p_{s_i}(c).$
\end{definition}

We can think of specifying $s$ as choosing a particular multi-faceted die, and $c$ as the facet that results from throwing the die.
The joint probability distribution of random variables $C_1, \dots, C_n$ and $S_1, \dots, S_n$ in a generalized SV source factorizes as follows:
$$p(c_1, c_2, \cdots, c_n, s_1, s_2, \cdots, s_n)=q(s_1)p_{s_1}(c_1)q(s_2|c_1)p_{s_2}(c_2)\cdots q(s_n|c_1\cdots c_{n-1})p_{s_n}(c_n),$$
where $q(s_i|c_1\cdots c_{i-1})$ describes the action of the adversary at time $i$. Here, first the adversary chooses $S_1=s_1$ with probability $q(s_1)$, and then $C_1=c_1$ is generated with probability $p_{s_1}(c_1)$. Then the adversary chooses $S_2=s_2$ with probability $q(s_2|c_1)$ and then $C_2=c_2$ is generated with probability $p_{s_2}(c_2)$, and so on.

Generalized SV sources can be alternatively characterized as follows:
Given $i$ and $C_1=c_1, \ldots, C_{i-1}=c_{i-1}$, the distribution of $C_i$ should be a convex combination of the set of $|\mathcal S|$ distributions $\{p_s(\cdot): s \in \mathcal S\}$.

We emphasize that even after fixing distributions $p_s(c)$, the generalized SV source (similar to ordinary SV sources) is not
a fixed source, but rather a class of sources. This is because in each step $s_i$ is chosen arbitrarily by the adversary as a (probabilistic) function of $C_1, \ldots, C_{i-1}$. Nevertheless, once we fix adversary's strategy, the generalized SV source is fixed in that class of sources.

\begin{definition}[Deterministic extraction]
We say that deterministic randomness extraction from the generalized SV source determined by distributions
$p_s(c)$ is possible if for every $\epsilon > 0$ there exist $n$ and $\Gamma_n: \mathcal C^n \rightarrow \{0,1\}$
such that for every strategy of the adversary,
the distribution of $\Gamma_n(C^{n})$ is $\epsilon$-close, in total variation distance, to the uniform distribution.
That is, independent of adversary's strategy, $\Gamma_n(C^n)$ is an almost uniform bit.
\end{definition}

In the following we present a necessary condition and separately a sufficient condition for the existence of deterministic extractors for generalized SV sources. In the non-degenerate case, i.e., when $p_s(c) > 0$ for all $s, c$, these two conditions coincide. Thus we fully characterize the possibility of deterministic randomness extraction from generalized SV sources in the non-degenerate case.

\subsection{A sufficient condition for the existence of randomness extractors}

In this subsection we prove the following theorem.

\begin{theorem}\label{thm:martingale}
Consider a generalized SV source with alphabet $\mathcal C$, set of dice $\mathcal S$, and probability distributions $p_s(c)$.
Suppose that there exists $\psi:\mathcal C\rightarrow \mathbb R$ such that for every $s \in \mathcal S$ we have
$\E_{(s)}[\psi(C)]=0$ and $\Var_{(s)}[\psi(C)]>0$,
where $\E_{(s)}$ and $\Var_{(s)}$ are expectation and variance with respect to the distribution $p_s(\cdot)$.
Then randomness can be extracted from this SV source.
\end{theorem}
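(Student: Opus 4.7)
The plan is to lift $\psi$ to a martingale $M_n := \sum_{i=1}^n \psi(C_i)$, stop it at the first exit time from an interval $[-L,L]$, and output the sign of the stopped value. The hypothesis provides exactly the two ingredients this needs: mean zero of $\psi$ under every $p_s$ makes $(M_n)$ a martingale regardless of the adversary's strategy, and positivity of the variance under every $p_s$ forces the martingale to escape $[-L,L]$ in bounded expected time.

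The first step is to set up the martingale. Taking the filtration $\mathcal F_n := \sigma(C_1,\ldots,C_n)$ and conditioning first on $S_i$, I would argue that for any (possibly randomized) adversary rule one has $\E[\psi(C_i) \mid \mathcal F_{i-1}] = 0$ and $\E[\psi(C_i)^2 \mid \mathcal F_{i-1}] \ge \sigma^2 := \min_{s\in\mathcal S}\Var_{(s)}[\psi(C)] > 0$, the minimum being attained because $\mathcal S$ is finite. The increments are also uniformly bounded by $B := \max_c|\psi(c)|$.

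Next I would fix parameters $L, N > 0$, define $T := \inf\{n \ge 1 : |M_n| \ge L\}$ and $\tilde T := T \wedge N$, and apply optional stopping twice. Applied to the Doob-type martingale $M_n^2 - \sum_{i \le n} \Var[\psi(C_i) \mid \mathcal F_{i-1}]$ at the bounded stopping time $\tilde T$, it gives $\sigma^2\,\E[\tilde T] \le \E[M_{\tilde T}^2] \le (L+B)^2$, so by Markov $r := \Pr[T > N] \le (L+B)^2/(N\sigma^2)$ uniformly over strategies. Applied to $M_n$ itself at $\tilde T$, it gives $\E[M_{\tilde T}] = 0$. Writing $p := \Pr[T \le N,\, M_T > 0]$ and $q := \Pr[T \le N,\, M_T < 0]$, so $p+q = 1-r$, and using that on $\{T \le N\}$ the increment bound forces $M_T \in [L,L+B]\cup[-L-B,-L]$ while $|M_{\tilde T}| < L$ on $\{T > N\}$, a short manipulation of $\E[M_{\tilde T}]=0$ yields $|p-q| \le r + B/L$, hence $|p - \tfrac12| \le r + B/(2L)$.

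The extractor $\Gamma_n$ is then the explicit deterministic rule that computes $M_1,\ldots,M_n$ from $c^n$, finds the first $i \le n$ with $|M_i| \ge L$, and outputs $1$ iff such $i$ exists and $M_i > 0$ (outputting $0$ if no such $i$ exists). A direct estimate gives total-variation distance from uniform at most $r + B/(2L)$, so choosing $L := B/\epsilon$ and then $N$ of order $1/\epsilon^3$ makes each term at most $\epsilon/2$. The step I expect to be most delicate is guaranteeing that these martingale and optional-stopping arguments hold \emph{uniformly} over all adversarial strategies, which is precisely why the initial conditioning on $S_i$ is essential and why finiteness of $\mathcal S$ is needed to extract genuine positive constants $\sigma^2$ and $B$; once this is in place, the rest of the argument is a standard martingale overshoot estimate.
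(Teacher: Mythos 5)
Your proposal is correct and follows essentially the same route as the paper's proof: the same martingale $\sum_i\psi(C_i)$, the same exit-time stopping rule, the same variance submartingale plus optional stopping to bound $\E[\tau]$ and control $\Pr[\tau>N]$, and the same use of $\E[M_{\tilde T}]=0$ with the overshoot bound to control the bias. The only differences are cosmetic (your explicit $|p-q|\le r+B/L$ bound versus the paper's two-sided estimate on $\Pr[Y_\tau\in[M,M+m)]$, and the explicit parameter choice $L=B/\epsilon$, $N=\Theta(1/\epsilon^3)$).
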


Observe that if $p_s(c)>0$ for all $s, c$, then this theorem can equivalently be stated as follows:
Thinking of each distribution $p_s(\cdot)$ as a point in the probability simplex, if the
 convex hull of the set of points $\{p_s(\cdot): s \in \mathcal S \}$ in the probability simplex does not have full dimension, then deterministic randomness extraction is possible. For instance if $|\mathcal S|<|\mathcal C|$ this condition is always satisfied and then we can deterministically extract randomness.

\begin{proof}[Proof of Theorem~\ref{thm:martingale}]
Pick a sufficiently large (but constant) number $M$.
Define random variables $X_1, \ldots, X_n$ and $Y_0, \ldots, Y_n$ inductively as follows:
Let $Y_0 = 0$, and for $i = 1, \ldots, n$,  define $Y_i = Y_{i-1} + X_i$ where $X_i = \psi({C_i})$.
Observe that by our assumption we have $\E[X_i| X_1, \dots, X_{i-1}]=0$,  so $Y_0, \dots, Y_n$ forms a martingale.

Let $\tau$ be the first time $t$ such that $|Y_t| \ge M$;
if no such $t$ exists, define $\tau = n$.
Clearly, $\tau$ is a stopping time for the martingale.
Now define the extracted bit to be $1$ if $Y_\tau \ge M$; otherwise define it to be 0. We show that this is a true random bit extractor.

Let $v = \min_s \Var_{(s)}[\psi] > 0$.
Define $Z_i = Y_i^2 - i v$.
We claim that $Z_i$ is a submartingale with respect to $X_1, \ldots, X_n$. To show this we compute
\begin{align*}
\E [Z_i | X_1, \ldots, X_{i-1}] & = \E\big[ (X_i + Y_{i-1})^2 - i v \big| X_1, \ldots, X_{i-1}\big] \\
& = \E\big[ (Y_{i-1}^2 - (i - 1) v) + (X_i^2 - v) + 2 X_i Y_{i-1} \big| X_1, \ldots, X_{i-1}\big] \\
& \ge Z_{i-1}.
\end{align*}
Here we used $Z_{i-1} = Y_{i-1}^2 - (i - 1) v,$ and
$$\E [X_i Y_{i-1} | X_1, \ldots, X_{i-1}] = Y_{i-1} \E[X_i | X_1, \ldots, X_{i-1}] = 0,$$
and that by the law of total variance
$$\E[X_i^2 | X_1, \ldots, X_{i-1}] = \Var[\psi(C_i)| X_1, \ldots, X_{i-1}] \ge \Var[\psi(C_i) | X_1, \ldots, X_{i-1}, S_i] \ge v.$$
Therefore by optional stopping theorem for submartingales,
we have
$$\E [Z_\tau] \ge \E[Z_0] = 0,$$
or equivalently
$$\E [Y_\tau^2] \ge  v \E[\tau].$$

Let $m = \max_c |\psi(c)|$. Then, by the definition of $\tau$ we have $|Y_\tau| \le M + m$.
Therefore, $$\E[\tau] \le \frac{\E[Y_\tau^2]}{v} \le \frac{(M+m)^2}{v}.$$
Hence by the Markov inequality we have
$$\Pr[\tau = n] \le \frac{(M+m)^2}{vn} = O\big(\frac 1 n\big).$$
This means that
$$\Pr\big[Y_\tau \in [M, M + m) \cup (-M - m, -M]\big] = 1 - O\big(\frac 1 n\big).$$
On the other hand, for the martingale $Y_0, Y_1, \ldots$, we have $\E [Y_\tau] =\E[Y_0]= 0$.
Together with $|Y_\tau| \le M + m$, this implies
$$\frac{M}{2M+m} + O\big(\frac 1 n\big) \le \Pr[Y_\tau \in [M, M + m)] \le \frac{M+m}{2M+m} + O\big(\frac 1 n\big).$$
Therefore, the extracted bit has sufficiently small bias as $M, n$ are chosen sufficiently large.
This is because $m=\max_c |\psi(c)|$ is a constant, independent of $M$ and $n$.

\end{proof}

\begin{remark}
Note that the extractor constructed in the above proof is \emph{explicit}.
Moreover, although we have only mentioned how to extract a single bit,
the analysis shows that for arbitrarily small (but \emph{constant}) bias, one can extract \emph{linearly} many bits each having at most that bias given the previous bits. This can be done by partitioning the sequence into a linear number of blocks. One bit is extracted from each block. Each produced bit is almost uniform, given the past blocks and hence given the past produced bits. Thus, the bits are almost uniform and almost mutually independent.
\end{remark}

\begin{remark}
Note that we could have chosen $M = \Theta(n^{1/3})$ in the above proof.
Then the analysis would have shown that the bias is polynomially small, namely a bias of $\Theta(n^{-1/3})$.
\end{remark}

\subsection{A necessary condition for the existence of randomness extractors}

The main result of this subsection is the following theorem.

\begin{theorem}\label{thm:necessary-condition}
Consider a generalized SV source with alphabet $\mathcal C$, set of dice $\mathcal S$, and probabilities $p_s(c)$.
Suppose that there is no non-zero function $\psi:\mathcal C\rightarrow \mathbb R$ such that for all $s\in \mathcal S$ we have $\E_{(s)}[\psi(C)]=0$. Then deterministic randomness extraction from this generalized SV source is impossible.
\end{theorem}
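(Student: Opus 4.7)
The plan is to fix an arbitrary deterministic extractor $\Gamma_n:\mathcal C^n\to\{0,1\}$ and analyse it through its depth-$n$ history tree. For each history $c^k\in\mathcal C^k$, set
\begin{align*}
\alpha_k(c^k) &= \min_{\text{future strategies}} \Pr\bigl[\Gamma_n(c^k,C_{k+1},\dots,C_n)=0\bigr],\\
\beta_k(c^k) &= \max_{\text{future strategies}} \Pr\bigl[\Gamma_n(c^k,C_{k+1},\dots,C_n)=0\bigr].
\end{align*}
These obey the backward recursions $\alpha_{k-1}(c^{k-1})=\min_{s\in\mathcal S}\sum_{c} p_s(c)\,\alpha_k(c^{k-1},c)$ and $\beta_{k-1}(c^{k-1})=\max_{s\in\mathcal S}\sum_{c} p_s(c)\,\beta_k(c^{k-1},c)$, with $\alpha_n=\beta_n=\Gamma_n\in\{0,1\}$ at the leaves. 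The impossibility claim reduces to showing that the root pair $(\alpha_0,\beta_0)$ cannot be simultaneously close to $(1/2,1/2)$.

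The technical goal is to produce a continuous function $g:[0,1]\to[0,1]$ with $g(0)\le 0$, $g(1)\le 1$ and $g(1/2)>1/2$, possessing the invariance property that whenever numbers $b_c\ge g(a_c)$ are given for every $c\in\mathcal C$, the pair
\[
a=\min_{s}\sum_{c} p_s(c)\,a_c,\qquad b=\max_{s}\sum_{c} p_s(c)\,b_c
\]
again satisfies $b\ge g(a)$. With such a $g$ in hand, backward induction on $k$ from the leaves (where the claim holds trivially because $\alpha_n=\beta_n\in\{0,1\}$) yields $\beta_0\ge g(\alpha_0)$ at the root. If an arbitrarily unbiased extractor existed, $(\alpha_0,\beta_0)$ would approach $(1/2,1/2)$; continuity of $g$ together with $g(1/2)>1/2$ then forces $\beta_0>1/2+\delta$ for some positive $\delta$ depending only on the source, contradicting $\beta_0\to 1/2$. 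So the bias is bounded below by a positive constant.

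The hypothesis enters in the construction of $g$. Nonexistence of a nonzero $\psi$ with $\E_{(s)}[\psi]=0$ for all $s$ is equivalent to $\{p_s\}_{s\in\mathcal S}$ spanning $\mathbb R^{\mathcal C}$, equivalently to $\text{conv}\{p_s\}$ having full dimension in the probability simplex. This is exactly the obstruction to the martingale construction of Theorem~\ref{thm:martingale}, and it is precisely what is needed in order to build the envelope $g$: the adversary's actions in the min and max can be ``decoupled'' at each step because the distributions $\{p_s\}$ point in all independent directions of the simplex, leaving room for a strict inequality $g(1/2)>1/2$.

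The main obstacle I anticipate is verifying this invariance. A naive convex choice cannot succeed, since a convex $g$ with $g(0)\le 0$ and $g(1)\le 1$ already satisfies $g(1/2)\le 1/2$. My plan is to construct $g$ as the lower envelope of a parametric family of piecewise-linear admissible functions, each tailored to one combinatorial case of which child realises the min in the $\alpha$-recursion and which realises the max in the $\beta$-recursion; the spanning hypothesis enters when gluing these pieces into a single globally continuous $g$ that lies strictly above $1/2$ at $\alpha=1/2$. This envelope is exactly the curve sketched in Figure~\ref{Fig1}, and its careful construction for general $\mathcal C$ and $\mathcal S$ is the delicate part of the argument.
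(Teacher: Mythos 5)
Your framework is exactly the paper's: the backward recursions for $\alpha$ and $\beta$, the inductive claim $\beta_0\ge g(\alpha_0)$, and the observation that a continuous $g$ with $g(1/2)>1/2$ separates the achievable pairs from $(1/2,1/2)$ are all correct and match the intended argument, as does your (correct) translation of the hypothesis into the statement that $\{p_s\}$ spans $\mathbb R^{\mathcal C}$. The problem is that the entire content of the theorem lives in the construction of $g$, and your proposal stops at a plan for that construction. Worse, the plan you sketch has a concrete obstacle: a lower envelope $g=\min_i g_i$ of admissible functions need not be admissible, because from $b_c\ge\min_i g_i(a_c)$ the minimizing index may differ across $c$, so you cannot pass to a single $g_{i_0}$ and invoke its invariance. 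You correctly note that convexity is forbidden, but you do not exhibit any non-convex $g$ satisfying the invariance, so the proof is incomplete at its crux.

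The paper's construction is a perturbation of the identity, $g_\epsilon(x)=x+\epsilon f(x)$ with $f$ smooth, $f(0)=f(1)=0$, $f(1/2)>0$; monotonicity holds for $\epsilon<1/M$ where $M$ bounds $|f'|$, and the required inequality $\max_{s,s'}\E_{(s)}[g_\epsilon(X)]-g_\epsilon(\E_{(s')}[X])\ge 0$ is verified by writing $X=x\mathbf 1_C+rT$ with $T$ normalized and mean-zero under the uniform distribution. The hypothesis enters through compactness: since no nonzero $\psi$ has $\E_{(s)}[\psi]$ independent of $s$, there is a uniform $\Delta>0$ with $\max_{s,s'}\E_{(s)}[T]-\E_{(s')}[T]\ge\Delta$ on the compact set of such $T$, and this first-order gain $r\Delta$ dominates the perturbation's contribution, which is $O(\epsilon r M|\mathcal C|)$. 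Without this quantitative step (or a working substitute), your argument does not establish the theorem; I would redirect your effort from the piecewise-linear envelope to this perturbative construction.
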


Again, let us consider the case where $p_s(c)>0$ for all $s, c$. In this case $\psi$ being non-zero is equivalent to $\Var_{(s)}[\psi]>0$ for all $s$. Then comparing to Theorem~\ref{thm:martingale} we find that the necessary and sufficient condition for the possibility of deterministic extraction is the existence of a non-zero $\psi$ with $\E_{(s)}[\psi]=0$.

In Appendix~\ref{app:proof1-necessary} we give a proof of this theorem based on ideas in~\cite{RVW}. Here we present another proof whose ideas will be used in the distributed case too.

\begin{proof}[Proof of Theorem \ref{thm:necessary-condition}]
A deterministic randomness extraction algorithm corresponds to a subset $\mathcal I\subseteq \mathcal C^n$ such that the extracted bit is $0$ if the observed $c^{n}$ is in $\mathcal I$, and is $1$ otherwise. For any $n$, and any such $\mathcal I\subseteq \mathcal C^n$, let $\alpha(\mathcal I)$ and $\beta(\mathcal I)$  respectively be the minimum and maximum of the probability of output $0$ over all strategies of the adversary, i.e.,
$$\alpha(\mathcal I):=\min \Pr[C^{n}\in \mathcal I], \qquad \beta(\mathcal I):=\max \Pr[C^{n}\in \mathcal I],$$
where minimum and maximum are taken over adversary's strategies.

Fix a deterministic algorithm for randomness extraction. To prove the theorem we need to show that for every such $\mathcal I$, either $\alpha(\mathcal I)$ or $\beta(\mathcal I)$ is far from $1/2$. The numbers $\alpha(\mathcal I), \beta(\mathcal I)$ can be computed recursively as follows. For every $c\in \mathcal C$, let $\mathcal I_c:=\{c_{[2: n]}:\, (c, c_{[2:n]})\in \mathcal I\}$. Note that $\mathcal I_{c}$ is a subset of $\mathcal C^{n-1}$ for which $\alpha(\mathcal I_c)$ is defined. We claim that
$$\alpha(\mathcal I)=\min_s \sum_c p_s(c)\alpha(\mathcal I_c)= \min_s \E_{(s)}[\alpha(\mathcal I_C)].$$
To verify this, suppose that the adversary in the first step chooses $s_1=s$. Then $C_1=c$ occurs with probability $p_{s}(c)$. Assuming $C_1=c$, the final extracted bit is equal to $0$ if $(C_2, \dots, C_n)\in \mathcal I_c$. Since, by definition,  the minimum of the probability of this latter event is $\alpha(\mathcal I_c)$, the (unconditional) probability of the extracted bit being 0 is equal to $\sum_c p_s(c)\alpha(\mathcal I_c)$. Taking the minimum of this expression over all $s_1=s$ gives $\alpha(\mathcal I)$.
We similarly have
$$\beta(\mathcal I)= \max_s \E_{(s)} [\beta(\mathcal I_C)].$$

By the above discussion to compute $\alpha(\mathcal I)$ and $\beta(\mathcal I)$ for $\mathcal I\subseteq \mathcal C^{n}$ it suffices to compute these numbers for subsets of $\mathcal C^{n-1}$. Thus the functions $\alpha(\cdot)$ and $\beta(\cdot)$ can be computed recursively. The above recursive procedure can be understood as assigning two values to each node of the tree associated with the extractor, as described in the ``proof techniques" subsection of the introduction.

Let $\Phi_n$ be the set of pairs $(\alpha(\mathcal I), \beta(\mathcal I))$ for all subsets $\mathcal I\subseteq \mathcal C^n$. In other words, for $n\geq 1$ define
$$\Phi_n:= \big\{(\alpha(\mathcal I), \beta(\mathcal I)):\,   \mathcal I\subseteq \mathcal C^n\big\}.$$
Also let
$$\Phi_0=\{(0,0), (1,1)\}.$$
Observe that $\Phi_0$  corresponds to the case when there is no SV source to look at, and the deterministic extractor outputs a constant bit.
Now by the above discussion, $\Phi_n$ is indeed the set of pairs $(x, y)$ for which there exist $X, Y:\mathcal C\rightarrow \mathbb R$ such that  $(X(c), Y(c))=(x_c, y_c)\in \Phi_{n-1}$ for every $c\in \mathcal C$, and that
\begin{align}
x&=\min_s \E_{(s)}[X]=\min_s \sum_{c}p(c|s)X(c), \nonumber
\\y&=\max_s \E_{(s)}[Y]=\min_s \sum_{c}p(c|s)Y(c).\label{eq:xy-min-max}
\end{align}
 A full characterization of the set $\Phi_n$ for the original binary SV source is given in Appendix~\ref{appendixC}.

Suppose that $g:[0,1]\rightarrow \mathbb R$ is a function that satisfies the followings:
\begin{itemize}
\item $g$ is continuous and \emph{monotone},
\item we have
\begin{align}\label{eq:g-0-1-12}
g(0)=0,\qquad g(1)=1, \qquad g(1/2)>1/2,
\end{align}
\item and for all $X:\mathcal C\rightarrow [0,1]$ we have
$$\max_s \E_{(s)}[g(X)] \geq \min_{s'} g\big(  \E_{(s')} [X] \big),$$
or equivalently
\begin{align}\label{eq:inequality-cs-2}
\max_{s,s'} \E_{(s)} [g(X)] - g\big(  \E_{(s')}[X] \big)\geq 0.
\end{align}
\end{itemize}
Then we claim that $\beta(\mathcal I)\geq g(\alpha(\mathcal I))$. To prove this, it suffices to show that for all $(x, y)\in \Phi_n$ we have $y\geq g(x)$. The latter statement can be proved by induction on $n$. The base of induction, $n=0$, follows from $g(0)=0$ and $g(1)=1$. Assuming that $(x, y)\in \Phi_n$ is obtained from~\eqref{eq:xy-min-max} for $(x_c, y_c)\in \Phi_{n-1}$, by the induction hypothesis we have $y_c\geq g(x_c)$, and then
\begin{align*}
g(x) & = g\big(\min_s \E_{(s)}[X]\big)\\
& = \min_s g\big(\E_{(s)}[X]\big)\\
& \leq \max_s \E_{(s)}[g(X)]\\
& \leq \max_s \E_{(s)}[Y]\\
& = y.
\end{align*}
Here in the second line we use the monotonicity of $g$, and in the fourth line we use the induction hypothesis.

If such a function $g$ with the above properties exists, then $\alpha(\mathcal I)$ and $\beta(\mathcal I)$ cannot both be arbitrary close to $1/2$. To verify this, note that  $\beta(\mathcal I)\geq g(\alpha(\mathcal I))$, so if $(\alpha(\mathcal I), \beta(\mathcal I))\simeq (1/2, 1/2)$, by the continuity of $g$ we have $1/2\gtrsim g(1/2)$. This is a contradiction since $g(1/2)>1/2$.
As a result, we only need to prove the existence of the function $g$.

Let $f:[0,1]\rightarrow \mathbb R$ be a smooth function such that $f(1/2)>0$ and $f(0)=f(1)=0$.
We show that the function $g_\epsilon$ defined by
\begin{align}g_\epsilon(x):=x+ \epsilon f(x),\label{eqn:defi:g_ep}\end{align}
for sufficiently small $\epsilon>0$, satisfies the desired properties. Verification of~\eqref{eq:g-0-1-12} is easy. For the monotonicity of $g_\epsilon$, note that since $f$ is smooth, there is a uniform upper bound $|f'(x)|\leq M$ on the derivative of $f$. Then for $\epsilon<1/M$, the function $g_\epsilon$ is monotone. It remains to show~\eqref{eq:inequality-cs-2}.

Define
$$\mathcal T:= \big\{T:\mathcal C\rightarrow [0,1]:\, \|T\|_*=1,\, \, \E_*[T]=0\big\}.$$
For every $T\in \mathcal T$ we have
$$\max_{s,s'} \E_{(s)}[T] -\E_{(s')}[T]>0,$$
because otherwise we would obtain a non-constant function whose expectation is independent of $s$, which is in contradiction with our assumption in the statement of the theorem.
Therefore, using the compactness of $\mathcal T$, there is $\Delta>0$ such that
$$\max_{s,s'} \E_{(s)}[T] -\E_{(s')}[T]>\Delta,\qquad\qquad \forall T\in \mathcal T.$$

Let $X:\mathcal C\rightarrow  [0,1]$ be an arbitrary function. Then, letting $x=\mathbb{E}_*[X]$ and $r=\sqrt{\Var_*[X]}\geq 0$ we get that
$$X= x \mathbf 1_C + rT=x+rT,$$
for some $T\in \mathcal T$, i.e., $x_c=x+ rt_c$ for all $c\in \mathcal C$.
From equation \eqref{eqn:defi:g_ep} we have
\begin{align*}
\max_{s,s'} &\, \E_{(s)}[g_\epsilon(X)] - g_\epsilon\big(  \E_{(s')}[X] \big) \\
&= \max_{s, s'}\, r \big(\E_{(s)}[T]- \E_{(s')}[T]\big) + \epsilon\Big(\E_{(s)}[f(x+rT)] - f\big(x+r\E_{(s')}[T] \big)   \Big).
\end{align*}

For every $0\leq x,  y \leq 1$ there is some $z$ (between $x$ and $y$) such that $f(y)=f(x)+ (y-x)f'(z)$.
Using the upper bound $M$ on the derivative of $f$ we obtain
$$f(x)-M|y-x|\leq f(y)\leq f(x)+M|y-x|.$$
Therefore, using the fact that $|t_c|\leq \sqrt{|\mathcal C|} \le |\mathcal C|$ (implied by $\|T\|_*=1$), we have
\begin{align*}
\max_{s,s'} &\, \E_{(s)}[g_\epsilon(X)] - g_\epsilon\big(  \E_{(s')}[X] \big) \\
&\geq \max_{s, s'}\, r \big(\E_{(s)}[T]- \E_{(s')}[T]\big) + \epsilon\Big(\E_{(s)}[f(x)-rM|T|] - \big(f(x) + r M \E_{(s')}[|T|]\big) \Big)\\
&\geq \max_{s, s'}\, r \big(\E_{(s)}[T]- \E_{(s')}[T]\big) - 2\epsilon r M |\mathcal C|\\
&\geq r(\Delta - 2\epsilon M |\mathcal C|).
\end{align*}
which is strictly positive if $\epsilon<\Delta/(2M|\mathcal C|)$. Then the function $g_\epsilon$ for
$$\epsilon<\min\{  1/M,  \Delta/(2M|\mathcal C|)\},$$
has all the desired properties.
\end{proof}

\begin{corollary}\label{cor:necessary}
Consider a generalized SV source with alphabet $\mathcal C$, set of dice $\mathcal S$, and probabilities $p_s(c)$. Let $\mathcal{S}'$ be a subset of $\mathcal{S}$ and let $\mathcal{C}'$ be the set of all $c$ for which there exists some $s\in \mathcal{S}'$ such that $p_{s'}(c)>0$. Suppose that there is no non-zero function $\psi:\mathcal C\rightarrow \mathbb R$ such that (i) $\psi$ is zero on $\mathcal C - \mathcal{C}'$, and (ii) for all $s\in \mathcal S'$ we have $\E_{(s)}[\psi(C)]=0$. Then deterministic randomness extraction from this generalized SV source is impossible.
\end{corollary}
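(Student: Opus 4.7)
The plan is to reduce to Theorem~\ref{thm:necessary-condition} via the simple observation that any adversary for a ``sub-source'' is a legal adversary for the full source. Concretely, I would consider the auxiliary generalized SV source with alphabet $\mathcal C'$, set of dice $\mathcal S'$, and distributions $p_s|_{\mathcal C'}$ for $s\in \mathcal S'$; these are honest probability distributions on $\mathcal C'$ by the very definition of $\mathcal C'$. The argument will then have two pieces: a translation of hypotheses, and a reduction of extractors.

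First I would check that the hypothesis of Corollary~\ref{cor:necessary} is precisely the hypothesis of Theorem~\ref{thm:necessary-condition} applied to this auxiliary source. Functions $\psi:\mathcal C\to \mathbb R$ that vanish on $\mathcal C\setminus \mathcal C'$ are in one-to-one correspondence with functions $\psi':\mathcal C'\to \mathbb R$, and since every $p_s$ with $s\in \mathcal S'$ is supported on $\mathcal C'$, the two expectations $\E_{(s)}[\psi(C)]$ and $\E_{(s)}[\psi'(C)]$ agree for all $s\in \mathcal S'$. Hence the combined condition (i)--(ii) of the corollary is exactly the statement that no non-zero function on $\mathcal C'$ has expectation zero under every $p_s$, $s\in \mathcal S'$, which is precisely the hypothesis of Theorem~\ref{thm:necessary-condition} for the auxiliary source.

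Next, given any candidate extractor $\Gamma_n:\mathcal C^n\to \{0,1\}$ for the original source, its restriction to $(\mathcal C')^n$ is a candidate extractor for the auxiliary source. By Theorem~\ref{thm:necessary-condition} applied to the auxiliary source, there is a fixed $\epsilon_0>0$ such that for every $n$ and every $\Gamma_n$ some adversary strategy using only dice in $\mathcal S'$ makes the distribution of $\Gamma_n(C^n)$ at least $\epsilon_0$-far from uniform in total variation. Such a strategy is equally legal in the original source (the adversary simply never picks a die outside $\mathcal S'$) and induces exactly the same joint distribution of $C^n$. Therefore no $\Gamma_n$ can extract a bit $\epsilon$-close to uniform against every adversary of the original source for any $\epsilon<\epsilon_0$, and deterministic extraction is impossible.

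There is no serious obstacle in this argument; the only point that merits care is keeping straight that condition (i) (vanishing on $\mathcal C\setminus \mathcal C'$) is exactly what lets us identify $\psi$ with a function on the smaller alphabet $\mathcal C'$ without losing or gaining any content, so that Theorem~\ref{thm:necessary-condition} is correctly invoked for the auxiliary source rather than for the original one.
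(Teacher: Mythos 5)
Your proposal is correct and is exactly the intended derivation: the paper states the corollary without an explicit proof precisely because it follows by restricting the adversary to the dice in $\mathcal S'$ (so the source lives on $(\mathcal C')^n$) and applying Theorem~\ref{thm:necessary-condition} to the induced source on alphabet $\mathcal C'$, with condition (i) serving to identify the relevant $\psi$'s with functions on $\mathcal C'$. Your care in checking that the theorem yields a bias bound $\epsilon_0$ uniform over $n$ and over extractors, and that an $\mathcal S'$-only strategy is a legal strategy for the original source, covers the only points that need verification.
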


\section{Distributed SV sources}\label{sec:dist-sv}

Distributed SV sources can be defined similarly to generalized SV sources except that in this case, the outcome in each time step is a pair that is distributed between two parties. 
\begin{definition}\label{def:dist-SV}
Fix finite sets $\mathcal A, \mathcal B$, $\mathcal S$. Let $p_s(ab)$ define a probability distribution over $\mathcal A\times \mathcal B$ for any $s\in \mathcal S$. The distributed SV source with respect to distributions $p_{s}(ab)$ is defined as follows.
The adversary in each time step $i$, depending on the previous outcomes $(A_1, B_1)=(a_1, b_1), \dots, (A_{i-1}, B_{i-1})=(a_{i-1}, b_{i-1})$ chooses some $S_i = s_i$. Then $(A_i, B_i)=(a_i, b_i)$ is sampled from the distribution $p_{s_i}(a_ib_i)$. The sequence of random variables $(A_1, B_1), (A_2, B_2), \dots$, is called a distributed SV source.
\end{definition}
Here
we assume that the outcomes of this SV source are distributed between two parties, say Alice and Bob. That is, in each time step $i$, $A_i$ is revealed to Alice and $B_i$ is revealed to Bob. So Alice receives the sequence $A_1, A_2, \dots$, and Bob receive the sequence $B_1, B_2, \dots$.

In this section we are interested in whether two parties can generate a common random bit from distributed SV sources.
To be more precise, let us first define the problem more formally.

\begin{definition}
We say that common randomness can be extracted from the distributed SV source $(A_1, B_1), (A_2, B_2), \ldots$ if for every $\epsilon >0$ there is $n$ and functions $\Gamma_n:\mathcal A^n\rightarrow \B$ and $\Lambda_n:\mathcal B^n\rightarrow \B$ such that for every strategy of adversary, the distributions of $K_1=\Gamma_n(A^{n})$ and $K_2=\Lambda_n(B^n)$ are $\epsilon$-close (in total variation distance) to uniform distribution, and that $\Pr[K_1\neq K_2]<\epsilon$.
\end{definition}

In the above definition we considered only deterministic protocols for extracting a common random bit. We could also consider probabilistic protocols where $\Gamma_n$ and $\Lambda_n$ are random functions depending on \emph{private} randomnesses of Alice and Bob respectively.
More precisely, we could take $K_1=\Gamma_n(A^n, R_1)$ and $K_2=\Lambda_n(B^n, R_2)$ with the above conditions on $K_1, K_2$, where $R_1$ and  $R_2$ are private randomnesses of Alice and Bob respectively, which are independent of the SV source and of each other.
Nevertheless, if a common random bit can be extracted with probabilistic protocols, then common randomness extraction with deterministic protocols is also possible.

\begin{lemma}\label{lemma1}
In the problem of common random bit extraction, with no loss of generality we may assume that the parties do not have private randomness.
\end{lemma}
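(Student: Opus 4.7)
The plan is, given a probabilistic protocol $(\Gamma_n, \Lambda_n)$ with private randomness $R_1, R_2$ achieving error $\epsilon$, to exhibit a deterministic protocol of error $O(\epsilon)$ obtained by thresholding each party's ``conditional bias''. I would define
\[
p(a^n) := \Pr_{R_1}[\Gamma_n(a^n, R_1) = 0], \qquad q(b^n) := \Pr_{R_2}[\Lambda_n(b^n, R_2) = 0],
\]
which are deterministic $[0,1]$-valued functions of their arguments. Since $R_1, R_2$ are independent of the source and of each other, conditioned on $(A^n, B^n)$ the outputs $K_1, K_2$ are independent Bernoulli variables with parameters $p(A^n)$ and $q(B^n)$.

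The key step is to show that for every adversary strategy $\sigma$, $\mathbb{E}_\sigma[p(1-p)] = O(\epsilon)$ (and likewise for $q$); i.e., the probabilistic protocol is already almost deterministic after averaging out Alice's randomness alone. From the uniformity condition, $\mathbb{E}_\sigma[p] = \Pr_\sigma[K_1 = 0] \in [1/2 - \epsilon, 1/2 + \epsilon]$ and similarly for $q$. From the agreement condition, using conditional independence,
\[
\Pr_\sigma[K_1 \neq K_2] = \mathbb{E}_\sigma[p + q - 2pq] \leq \epsilon,
\]
which rearranges to $\mathbb{E}_\sigma[pq] \geq 1/2 - 3\epsilon/2$ and hence $\mathrm{Cov}_\sigma(p, q) \geq 1/4 - O(\epsilon)$. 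By Cauchy-Schwarz, $\mathrm{Var}_\sigma(p)\, \mathrm{Var}_\sigma(q) \geq 1/16 - O(\epsilon)$. Since any $[0,1]$-valued random variable $X$ satisfies $\mathrm{Var}(X) \leq \mathbb{E}[X](1 - \mathbb{E}[X]) \leq 1/4$, both variances are individually forced to be at least $1/4 - O(\epsilon)$. The identity $\mathbb{E}_\sigma[p(1-p)] = \mathbb{E}_\sigma[p](1 - \mathbb{E}_\sigma[p]) - \mathrm{Var}_\sigma(p)$ then yields the desired $O(\epsilon)$ bound.

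Having this, I would define the deterministic protocol by $\tilde{\Gamma}_n(a^n) := 0$ iff $p(a^n) \geq 1/2$, and analogously for $\tilde{\Lambda}_n$. The elementary inequality $\min(p, 1-p) \leq 2 p(1-p)$ gives the rounding bound $\Pr_\sigma[\tilde{K}_1 \neq K_1] = \mathbb{E}_\sigma[\min(p, 1-p)] \leq 2 \mathbb{E}_\sigma[p(1-p)] = O(\epsilon)$, and similarly for $\tilde{K}_2$. A triangle inequality then yields agreement $\Pr_\sigma[\tilde{K}_1 \neq \tilde{K}_2] \leq O(\epsilon)$, and the bias $|\Pr_\sigma[\tilde{K}_1 = 0] - 1/2|$ is bounded by the rounding error plus the original probabilistic bias, also $O(\epsilon)$. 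Since each bound holds for every adversary and only loses a constant factor in $\epsilon$, given target error $\epsilon'$ we apply the construction to a probabilistic protocol of error $\epsilon = \Theta(\epsilon')$, obtaining the required deterministic extractor.

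The main obstacle is the key step: recognizing that near-uniformity together with high agreement, under the no-communication (hence product) structure of the private randomness, \emph{rigidly} forces $p(A^n)$ and $q(B^n)$ to be near $\{0,1\}$-valued in expectation under every adversary. Once this bilinear-in-$(p,q)$ structure is identified, Cauchy-Schwarz makes the rigidity quantitative. A naive Yao-style minimax derandomization does not work cleanly here, since the product constraint $R_1 \perp R_2$ means we are minimizing over product distributions over $(R_1, R_2)$ rather than over the joint simplex, so convexity in each argument separately does not deliver pure-strategy optima.
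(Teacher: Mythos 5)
Your proof is correct, and while your rounding rule (threshold $p(a^n)=\Pr_{R_1}[K_1=0\mid A^n=a^n]$ at $1/2$) is exactly the paper's construction ($K_1'=\operatorname{argmax}_k \Pr[K_1=k\mid A^n]$), your analysis of why the rounding error is small is genuinely different. The paper argues pointwise: conditioned on $A^n=a^n$, the bit $K_2$ is independent of $R_1$, so $\Pr[K_1\neq K_2\mid a^n]$ is a convex combination of $\Pr[K_1=0\mid a^n]$ and $\Pr[K_1=1\mid a^n]$ and hence at least their minimum; taking expectations gives $\Pr[K_1'\neq K_1]\le\Pr[K_1\neq K_2]\le\epsilon$ directly, with clean constants and \emph{using only the agreement condition} for this step (near-uniformity enters only afterward, via the triangle inequality for the bias). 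You instead run a global second-moment argument: agreement plus near-uniformity of both marginals forces $\mathrm{Cov}_\sigma(p,q)\ge 1/4-O(\epsilon)$, Cauchy--Schwarz together with $\Var(q)\le 1/4$ forces $\Var_\sigma(p)\ge 1/4-O(\epsilon)$, and the identity $\E[p(1-p)]=\E[p](1-\E[p])-\Var(p)$ plus $\min(p,1-p)\le 2p(1-p)$ finishes. All steps check out (including the per-adversary quantification, since the thresholded protocol does not depend on $\sigma$), but your route needs both hypotheses where the paper needs one, and loses larger constants; on the other hand it isolates a reusable ``rigidity'' statement --- high agreement under a product randomness structure forces each party's conditional bias to concentrate near $\{0,1\}$ --- which is a nice conceptual takeaway. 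Your closing remark about why a naive minimax derandomization over product distributions fails is apt.
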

The proof of this lemma is given in Appendix \ref{app:lemma1}.

\subsection{Maximal correlation}

Let us first consider the problem of common randomness extraction in a simpler case where there is no adversary (in the i.i.d.\ case). That is, let us assume that we have only one distribution $p(ab)$, and Alice and Bob in each time $i$ receive samples $A_i$ and $B_i$ from this distribution. The question of the possibility of common randomness extraction can be raised in this case too.

Witsenhausen~\cite{Witsenhausen} used a measure of correlation called maximal correlation to prove a necessary and sufficient condition for the possibility of common randomness extraction from i.i.d.\ sources.

\begin{definition}[Maximal correlation]\label{def3}
The maximal correlation of random variables $A$ and $B$ with joint distribution $p(ab)$ denoted by $\rho(A, B)$ is defined by
\begin{align}\label{eq:max-correlatoin}
\rho(A, B):=& ~~\max \quad ~~~\E[XY],
\\&\text{\rm{subject to:} } \E[X]=\E[Y]=0, \nonumber\\
& \qquad \qquad \quad \E[X^2]=\E[Y^2]=1,\nonumber
\end{align}
where the maximum is taken over all functions $X:\mathcal{A}\rightarrow \mathbb{R}$,  $Y:\mathcal{B}\rightarrow \mathbb{R}$.
\end{definition}

Maximal correlation has the intriguing property that if $(A^n, B^n)$ is $n$ i.i.d.\ copies of $(A, B)$, then
$\rho(A^n, B^n)=\rho(A, B)$. Moreover, maximal correlation does not increase under local stochastic maps~\cite{Witsenhausen}.

From the definition and using Cauchy-Schwarz inequality it is not hard to verify that $0\leq \rho(A, B)\leq 1$. Further,  $\rho(A, B)=0$ if and only if $A, B$ are independent. To characterize the other extreme case $\rho(A, B)=1$ we need the notion of common data.

\begin{definition}
We say that $A, B$ have common data if there are non-constant functions $\Gamma(A)$ and $\Lambda(B)$, with arbitrary but the same images, such that $\Gamma(A)=\Lambda(B)$ with probability one. 
\end{definition}
Thus $A$ and $B$ have common data if Alice and Bob, having access to $A$ and $B$ respectively, can compute the same non-trivial data (i.e., $\Gamma(A)=\Lambda(B)$) without communication. We have $\rho(A, B)=1$ if and only if $A, B$ have common data.

\begin{theorem}\cite{Witsenhausen} \label{thm:Witsenhausen}
A common random bit can be extracted from i.i.d.\ copies of $A, B$ if and only if $\rho(A, B)=1$.
\end{theorem}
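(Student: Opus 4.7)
The plan is to prove the two directions of the equivalence separately, leaning entirely on the two properties of maximal correlation highlighted in the text: tensorization, $\rho(A^n,B^n)=\rho(A,B)$, and the data-processing inequality, $\rho(f(A),g(B))\le\rho(A,B)$ for any functions $f,g$.

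For the ``if'' direction, the first step is to show that $\rho(A,B)=1$ forces common data to exist. Since $\mathcal A,\mathcal B$ are finite, the supremum in~\eqref{eq:max-correlatoin} is attained by some $X,Y$; the identity $\E[(X-Y)^2]=2-2\E[XY]=0$ then forces $X(A)=Y(B)$ almost surely, and $C:=X(A)$ is nonconstant because $\Var[X]=1$. Applied to the i.i.d.\ sequence, Alice and Bob each locally compute the same i.i.d.\ sequence $C_1,\dots,C_n$ drawn from a nonconstant and fully known law. From such an i.i.d.\ source deterministic extraction of an almost uniform bit is standard (for instance, arithmetic-code $C^n$ into $[0,1]$ and read off a leading bit).

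For the ``only if'' direction, suppose $\rho(A,B)=\rho_0<1$ and, towards a contradiction, assume that for every $\epsilon>0$ there exist $n$ and deterministic extractors $\Gamma_n,\Lambda_n$ producing bits $K_1,K_2$ with marginals $\epsilon$-close to uniform and $\Pr[K_1\ne K_2]<\epsilon$. Tensorization together with data processing applied to $\Gamma_n,\Lambda_n$ yields $\rho(K_1,K_2)\le\rho(A^n,B^n)=\rho_0<1$. The contradiction will come from a matching direct lower bound $\rho(K_1,K_2)\to 1$ as $\epsilon\to 0$, obtained by plugging centered and normalized versions of the test functions $X(k)=2k-1$ and $Y(k)=2k-1$ into~\eqref{eq:max-correlatoin}. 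A short calculation on the joint distribution over $\B\times\B$ gives $\E[XY]=1-2\Pr[K_1\ne K_2]$ before centering, and uses near-uniformity of the marginals to control the means and variances by $O(\epsilon)$, producing $\rho(K_1,K_2)\ge 1-O(\epsilon)$, which for small enough $\epsilon$ exceeds $\rho_0$ and yields the desired contradiction.

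The only genuine obstacle is the equivalence ``$\rho(A,B)=1$ iff common data exists'' on which the ``if'' part rests; the key point is attainability of the supremum in~\eqref{eq:max-correlatoin}, which holds by compactness of the constraint set and continuity of the objective over finite alphabets. Once this lemma is in hand, everything else reduces to the two named properties of maximal correlation plus an elementary two-bit computation, so I do not anticipate further difficulties.
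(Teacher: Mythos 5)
Your proposal is correct, but it follows a genuinely different route from the one in the paper. For the ``only if'' direction you invoke the two classical properties of maximal correlation as black boxes --- tensorization, $\rho(A^n,B^n)=\rho(A,B)$, and non-increase under local (deterministic) post-processing --- and then close the argument with the elementary two-bit computation $\E[XY]=1-2\Pr[K_1\ne K_2]$ for $X=2K_1-1$, $Y=2K_2-1$, which gives $\rho(K_1,K_2)\ge 1-O(\epsilon)$ and contradicts $\rho(K_1,K_2)\le\rho(A,B)<1$. This is essentially Witsenhausen's original argument. The paper deliberately does \emph{not} take this route: it gives a self-contained inductive proof that tracks the triple $(\alpha(\mathcal I),\beta(\mathcal J),\gamma(\mathcal I,\mathcal J))$ through the recursion $\mathcal I\mapsto\mathcal I_a$, $\mathcal J\mapsto\mathcal J_b$ and shows that the potential $f(x,y,z)=(x+y)\rho-2z+2xy-(x^2+y^2)\rho$ stays nonnegative on every $\Phi_n$ while $f(1/2,1/2,1/2)<0$. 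The trade-off is clear: your proof is shorter and more modular, but it rests on tensorization, which the paper only cites and never proves (so strictly speaking your write-up inherits that dependency), whereas the paper's potential-function induction in effect reproves the needed consequence of tensorization from the one-shot Cauchy--Schwarz bound $2\E[X'Y']\le\rho(\E[X'^2]+\E[Y'^2])$, and --- crucially --- is the template that survives in the adversarial setting of Theorem~\ref{thm:main-distributed}, where the source is no longer i.i.d.\ and tensorization is not directly available. Your ``if'' direction (attainment of the supremum in~\eqref{eq:max-correlatoin} by compactness, $\E[(X-Y)^2]=0$, hence nonconstant common data, then standard extraction from a known i.i.d.\ source) coincides with the paper's; just note two minor points: the compactness argument should be run on the support of the marginals (values of $X$ at zero-probability atoms are unconstrained), and both your argument and the paper's implicitly use Lemma~\ref{lemma1} to reduce to deterministic $\Gamma_n,\Lambda_n$ before applying data processing.
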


Here, we give an alternative proof of this theorem whose ideas will be used later. This proof of Witsenhausen's theorem can also be of independent interest.

\begin{proof}
If $\rho(A, B)=1$, then $A, B$ have common data as defined above, and a common random bit can be extracted from that common data
by standard randomness extractors for i.i.d.\ sources.

For the other direction, suppose that $\rho(A, B)=\rho<1$, and that we can extract one bit of common randomness from $A, B$. By Lemma~\ref{lemma1} we
may assume that Alice and Bob's strategies for extracting common randomness are deterministic. That is, we may assume that there are subsets $\mathcal I\subseteq \mathcal A^n$ and $\mathcal J\subseteq \mathcal B^n$ such that Alice's extracted bit is $K_1=0$ if $A^n\in \mathcal I$ and Bob's extracted bit $K_2=0$ if $B^n\in \mathcal J$, and that $K_1, K_2$ are equal with high probability, and their distributions are close to uniform distribution over $\B$.

Let us define
\begin{align*}
\alpha(\mathcal I)&:=\Pr[A^n\in \mathcal I],\\
\beta(\mathcal J)&:=\Pr[B^n\in \mathcal J],\\
\gamma(\mathcal I, \mathcal J)&:=\Pr[A^n\in \mathcal I~\&\  B^n\in \mathcal J].
\end{align*}
Then by assumption these three number are all close to $1/2$.

For every $a\in \mathcal A$ and $b\in \mathcal B$ define $\mathcal I_a=\{a_{[2:n]}:\, (a, a_{[2:n]})\in \mathcal I\}$ and $\mathcal J_b:=\{b_{[2:n]}:\, (b, b_{[2:n]})\in \mathcal J\}$. Then as in the proof of Theorem~\ref{thm:necessary-condition} the numbers $\alpha(\mathcal I),\beta(\mathcal J)$ and $\gamma(\mathcal I, \mathcal J)$ can be computed recursively:
\begin{align*}
\alpha(\mathcal I)&=\sum_a p(a)\alpha(\mathcal I_a)=\E[\alpha(\mathcal I_A)],
\\\beta(\mathcal J)&=\sum_b p(b)\beta(\mathcal J_b)=\E[\beta(\mathcal J_B)],
\\\gamma(\mathcal I, \mathcal J)&=\sum_{a,b} p(a,b)\gamma(\mathcal I_{a}, \mathcal{J}_{b})=\E[\gamma(\mathcal I_A, \mathcal J_B)].
\end{align*}

For $n\geq 1$, let $\Phi_n$ be the set of triples $(\alpha(\mathcal I), \beta(\mathcal J), \gamma(\mathcal I, \mathcal J))$ for all $\mathcal I\subseteq \mathcal A^n$ and $\mathcal J\subseteq \mathcal B^n$.
Also let
$$\Phi_0=\big\{\mathbf{e}_0=(1, 1, 1), \mathbf{e}_1=(1, 0, 0), \mathbf{e}_2=(0, 1, 0), \mathbf{e}_3=(0, 0, 0)\big\}.$$
Observe that $\Phi_0$ corresponds to deterministic strategies of Alice and Bob that determine $K_1, K_2$ without looking at any random source. If, for instance,  Alice always outputs $K_1=0$ and Bob always outputs $K_2=1$, then
$\Pr[K_1=0]=1$, $\Pr[K_2=0]=0$, and $\Pr[K_1=K_2=0]=0$. This gives the triple $\mathbf e_1=(1,0,0)$ in $\Phi_0$.

Now since by the above discussions the numbers $\alpha(\mathcal I), \beta(\mathcal J)$ and $\gamma(\mathcal I, \mathcal J)$ can be computed recursively, the sets $\Phi_n$ can be characterized recursively too. Indeed, $\Phi_n$ for $n\geq 1$ is the set of triples $(x, y, z)$ for which there exist functions $X(a)=x_a, Y(b)=y_b$ and $Z(ab)=z_{ab}$ such that for all $(a, b)$ we have $(x_a, y_b, z_{ab})\in \Phi_{n-1}$ and that
\begin{align}\label{eq:xXyYzZ}
x=\E[X],\qquad y=\E[Y], \qquad z=\E[Z].
\end{align}

Let us define the function $f:[0,1]^2\rightarrow \mathbb R$ by
\begin{align}
f(x,y,z):=(x+y)\rho-2z+2xy -(x^2+y^2)\rho.
\label{eqn:def:fxyz}
\end{align}
We claim that $f(\alpha(\mathcal I), \beta(\mathcal J), \gamma(\mathcal I, \mathcal J))\geq 0$. Assuming this, we conclude that $\alpha(\mathcal I), \beta(\mathcal J)$ and $\gamma(\mathcal I, \mathcal J)$ cannot all be close to $1/2$ because $f$ is continuous and
$$f(1/2, 1/2, 1/2)=-\frac{1-\rho}{2}<0.$$

To prove our claim it suffices to show that $f(x, y, z)\geq 0$ for all $(x, y, z)\in \Phi_n$, which itself can be proved by induction on $n$. The base of induction, $n=0$, follows from $f(\mathbf e_\ell)\geq 0$ for $0\leq \ell\leq 3$. Now suppose that $(x, y, z)\in \Phi_n$ is obtained from functions $X, Y, Z$ as above that satisfy~\eqref{eq:xXyYzZ}. By the induction hypothesis for every $(a, b)$ we have $f(X(a), Y(b), Z(a b))\geq 0$. Then
to prove $f(x, y, z)\geq 0$,  it suffices to show that
$$f(x, y, z)\geq \E[f(X, Y, Z)].$$
Using~\eqref{eq:xXyYzZ},   we need to show that
$$f(\E[X], \E[Y], \E[Z])\geq \E[f(X, Y, Z)].$$
Using the definition of the function $f(\cdot)$ in equation \eqref{eqn:def:fxyz}, and by expanding both sides and canceling the linear terms, we need to show that
$$2\E[X]\E[Y] -\rho(\E[X]^2 + \E[Y]^2)\geq 2\E[XY] - \rho(\E[X^2] + \E[Y^2]).$$
Let us define $X'=X-\E[X]$ and $Y'=Y-\E[Y]$. Then, expressing the above inequality in terms of $X', Y'$ we need to show that
$$2\E[X'Y']\leq \rho(\E[X'^2] + \E[Y'^2]).$$
This inequality is a consequence of the definition of $\rho=\rho(A, B)$ because $\E[X']=\E[Y']=0$ and then
\begin{align*}
\E[X'Y']& \leq  \rho \sqrt{\E[X'^2]\E[Y'^2]} \leq \frac{1}{2} \rho(\E[X'^2] + \E[Y'^2]).
\end{align*}
\end{proof}

\subsection{Common data}

In the previous subsection we briefly discussed the notion of common data and recalled that $\rho(A, B)=1$ if and only if common data exists. To state our result, however, we need a more precise characterization of common data.

Suppose that $A, B$ have a common data, meaning that there are non-trivial functions $\Gamma(A)$ and $\Lambda(B)$ such that $\Gamma(A)=\Lambda(B)$. Let $\mathcal C$ be the images of these functions. For any $c\in \mathcal C$ define $\mathcal A_c=\Gamma^{-1}(c)$ and $\mathcal B_c=\Lambda^{-1}(c)$. Given the fact that $\Gamma(A)=\Lambda(B)$ always holds, then for every $c\neq c'$ and $(a, b)\in \mathcal A_c\times \mathcal B_{c'}$ we must have $p(ab)=0$.

To understand this more precisely consider a bipartite graph $\mathcal G$ on the vertex set $\mathcal A\cup \mathcal B$ with an edge between $(a, b)$ if $p(ab)\neq 0$. Then by the above observation, the existence of common data implies that the graph $\mathcal G$ is disconnected (and also at least two of the connected compoenents are not singletons); if $c\neq c'$ then there is no edge between vertices in $\mathcal A_c\cup \mathcal B_c$ and $\mathcal A_{c'}\cup \mathcal B_{c'}$.

Conversely, if $\mathcal G$ is disconnected (and also at least two of the connected components are not singletons) then common data exists; letting $\mathcal C$ be the sets of connected components, and defining $\Gamma(a), \Lambda(b)$ be the index of the connected component to which $a, b$ belong, we have $\Gamma(A)=\Lambda(B)$. As a result, $\rho(A, B)=1$ if and only if $\mathcal G$ is disconnected (and at least two of the connected components are not singletons).

We summarize the above discussion in the following lemma.

\begin{lemma}\label{lem:bipartite-graph}
Let $C$ be the random variable associated to the index of the connected component of $\mathcal G$ to which $(A, B)$ belong. Then $C$ can be computed as a function of $A$ or $B$ individually. Moreover, any common data of $A, B$ is a function of $C$, and $\rho(A, B)=1$ if and only if $C$ is non-trivial (i.e., $\mathcal G$ has at least two non-singleton connected components).
\end{lemma}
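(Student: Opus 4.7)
The plan is to prove the three claims separately, using the graph-theoretic setup established in the paragraphs preceding the statement. Most of the work is already implicit in that discussion; the proposal is to formalize it into a clean argument.

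First I would show that $C$ is a function of $A$ (the case of $B$ is symmetric). A realization $(A,B) = (a,b)$ has $p(a,b) > 0$ by definition, so the vertices $a$ and $b$ are joined by an edge in $\mathcal G$ and therefore lie in the same connected component. Consequently, the component containing $(A,B)$ is just the component of $\mathcal G$ that contains $A$, which depends only on $A$. This gives the first assertion.

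Next, to show that any common data is a function of $C$, I would take functions $\Gamma,\Lambda$ with $\Gamma(A) = \Lambda(B)$ almost surely, and argue that $\Gamma$ is constant on each connected component. For every edge $(a,b)$ of $\mathcal G$ one has $p(a,b) > 0$, hence $\Gamma(a) = \Lambda(b)$. Walking along a path $a_0,b_0,a_1,b_1,\dots$ in $\mathcal G$ and chaining these equalities forces $\Gamma(a_0) = \Gamma(a_1) = \cdots$ and the same constant value also equals $\Lambda$ on the $\mathcal B$-side of the component. Hence $\Gamma(A)$ depends on $A$ only through the component of $A$, which is exactly $C$.

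Finally, for the equivalence $\rho(A,B)=1 \iff C$ non-trivial, I would argue both directions. If $\mathcal G$ has at least two non-singleton components (so $C$ takes at least two values, each with positive probability), take $X = \varphi(C)$ and $Y = \varphi(C)$ for any real function $\varphi$ with $\E[\varphi(C)] = 0$ and $\E[\varphi(C)^2] = 1$; such a $\varphi$ exists because $C$ has at least two atoms of positive probability. Since $C$ is both a function of $A$ and of $B$, these $X,Y$ are admissible in Definition \ref{def3} and give $\E[XY] = 1$, so $\rho(A,B) = 1$. Conversely, suppose $\rho(A,B) = 1$ and pick $X,Y$ attaining the maximum. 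The equality case of Cauchy--Schwarz applied to $\E[XY] = \|X\|\,\|Y\| = 1$ forces $X(A) = Y(B)$ almost surely, i.e. $X(a) = Y(b)$ for every $(a,b)$ with $p(a,b) > 0$. By the argument of the previous paragraph (applied to the real-valued common data $X,Y$ rather than categorical ones), both $X$ and $Y$ are constant on each connected component of $\mathcal G$. Since $\Var(X) = 1 > 0$, this common value must differ between at least two components, each carrying positive probability (and hence being non-singleton), so $C$ is non-trivial.

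The only delicate point is the converse of the last step: one must notice that the common-data argument from step two works for arbitrary real-valued functions $X,Y$, not only for discrete-valued ones, so that the Cauchy--Schwarz equality case can be converted into the graph-disconnectedness statement. Everything else is a direct graph-theoretic translation of the definitions.
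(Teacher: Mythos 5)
Your proposal is correct, and the first two claims (that $C$ is computable from $A$ or $B$ alone, and that any common data factors through $C$) follow exactly the path-chaining argument the paper sketches in the discussion preceding the lemma. Where you genuinely diverge is in the third claim: the paper simply invokes the known equivalence ``$\rho(A,B)=1$ if and only if $A,B$ have common data'' (recalled just before Theorem~\ref{thm:Witsenhausen}, ultimately from \cite{Witsenhausen}) and combines it with the graph-theoretic characterization of common data, whereas you reprove that equivalence from scratch via the equality case of Cauchy--Schwarz. Your route is more self-contained: attainment of the maximum follows from compactness over finite alphabets, equality in $\E[XY]\le\|X\|\,\|Y\|$ with both norms equal to $1$ and $\E[XY]=1>0$ forces $X(A)=Y(B)$ almost surely, and then your step-two argument applied to the real-valued pair $X,Y$ shows they are constant on components; since $\Var(X)=1$ they must separate at least two positive-probability components, and a positive-probability component necessarily contains an edge and hence is non-singleton. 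This last observation (positive probability implies non-singleton, and singleton components carry zero mass) is the one point worth stating explicitly, and you do handle it. The paper's approach buys brevity by outsourcing the analytic fact to the literature; yours buys a complete argument at the cost of a page of extra detail, and you correctly flag the only delicate step, namely that the common-data argument must be run for real-valued rather than categorical functions.
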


\begin{example}\label{ex9}
Consider the following joint distribution on $\mathcal A\times \mathcal B$ where $\mathcal A=\mathcal B=\{1, 2, 3, 4\}$.
$$
\begin{tabular}{ c c | c | c |  c | c |}
\multicolumn{6}{c}{$\qquad B$}\\
\multicolumn{2}{c}{}&\multicolumn{1}{c}{$1$} &\multicolumn{1}{c}{$2$}&\multicolumn{1}{c}{$3$}&\multicolumn{1}{c}{$4$}\\
\cline{3-6}
&$1$&  $0.1$ & $0$ & $0$ & $0$  \\\cline{3-6}
A&$ 2$& $0.1$ & $0.2$ & $0$ & $0$  \\\cline{3-6}
&$ 3$& $0$ & $0$ & $0.1$ & $0.1$  \\\cline{3-6}
 &$4$& $0$ & $0$ & $0.2$ & $0.2$  \\\cline{3-6}
\end{tabular}
$$
The graph associated with this distribution is given in Figure~\ref{fig3}. This graph is disconnected. The common data of $A, B$ is one bit, determined by whether $A$ and $B$ are both in $\{1,2\}$ or in $\{3,4\}$.

\begin{figure}
\centering
\includegraphics[scale=0.8]{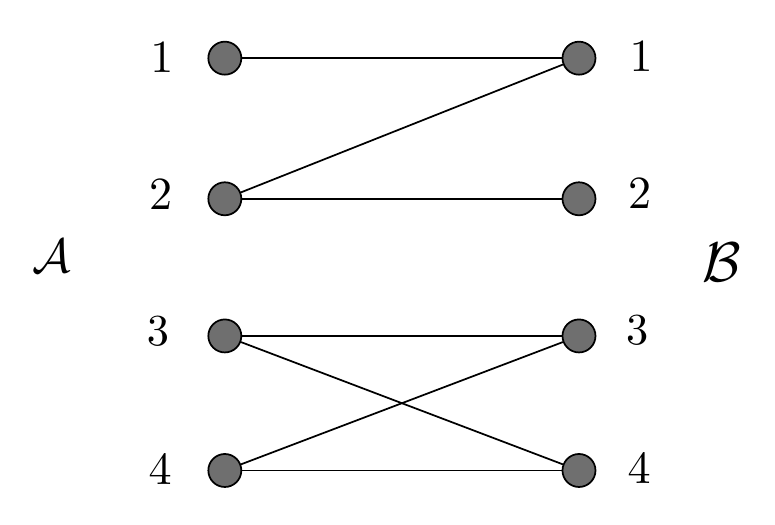}
\caption{The graph associated with probability distribution given in Example \ref{ex9}. This graph has two non-singleton connected components, so $A$ and  $B$ have common data.}
\label{fig3}
\end{figure}

\end{example}

Let $c\in \mathcal C$ be a connected component of $\mathcal G$. Then $p(ab|c)$, the distribution of $A, B$ conditioned on $C=c$, does not have common data. This is because the bipartite graph associated to this conditional distribution is nothing but the $c$-th connected component of $\mathcal G$, which by definition is connected. Denoting the maximal correlation of this conditional distribution by $\rho(A, B|C=c)$ we find that $\rho(A, B|C=c)<1$.

\begin{definition}[Conditional maximal correlation \cite{BeigiTse}]\label{def5}
Let $p(abc)$ be a tripartite distribution. We define
$$\rho(A, B|C) := \max_{c: \, p(c) > 0} \rho(A, B|C=c),$$
where $\rho(A, B|C=c)$ is the maximal correlation of the conditional bipartite distribution $p(ab|c)$.
\end{definition}

With this definition, for all bipartite distributions $p(ab)$, if we define $C$ to be the common part of $A$ and $B$ as in Lemma~\ref{lem:bipartite-graph} then
\begin{align}\label{eq:rho-abc-l-1}
\rho(A, B|C)<1.
\end{align}

\subsection{Common data of a distributed SV source}

Given a distributed SV source specified by distributions $p_s(ab)$, our goal is to determine whether a common random bit can be extracted from this source or not. Suppose that for some $s\in \mathcal S$, the maximal correlation of $p_s(ab)$, which we denote by $\rho_s(A, B)$, is less than $1$. Then, by Theorem~\ref{thm:Witsenhausen} common randomness extraction is impossible because the adversary can in all time steps choose $s_i=s$ to obtain an i.i.d.\ source. So we may assume that $\rho_s(A, B)=1$ for all $s$.

Let $\mathcal G_s$ be the bipartite graph associated to the bipartite distribution $p_s(ab)$. By the above observation we let $\rho_s(A, B)=1$ and then $\mathcal G_s$ has at least two non-singleton connected components. We claim that even the graph $\bigcup_s \mathcal G_s$, obtained by the union of edges of individual graphs $\mathcal G_s$, should also have at least two non-singleton components. To see this, assume that the adversary in each time step chooses $s_i\in \mathcal S$ uniformly at random and independent of the past. Then we obtain an i.i.d.\ source with distribution
$$q(ab)= \frac{1}{|\mathcal S|} \sum_s p_s(ab).$$
Then common randomness can be extracted from this i.i.d.\ source, only if the bipartite graph associated to $q(ab)$ is disconnected. It is easy to verify that this bipartite graph is nothing but $\bigcup_s \mathcal G_s$. So without loss of generality we may assume that $\bigcup_s \mathcal G_s$ has at least two non-singleton connected components.

The following lemma summarizes the above discussion.

\begin{lemma}\label{lem:SV-common}
For a distributed SV source we let $\mathcal G_s$ be the bipartite graph associated to $p_s(ab)$, and define $\bar{\mathcal G}=\bigcup_s \mathcal G_s$. Then a common bit can be extracted from the distributed SV source only if $\bar{\mathcal G}$ has at least two non-singleton connected components. Moreover, letting $C$ be the random variable corresponding to the connected components of $\bar{\mathcal G}$, then $C$ can be computed by Alice and Bob separately.
\end{lemma}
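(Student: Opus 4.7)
The plan is to prove both claims by reduction to the i.i.d.\ case and Theorem~\ref{thm:Witsenhausen}, exploiting the adversary's freedom to emulate arbitrary i.i.d.\ sources. The argument is essentially already outlined in the paragraph preceding the lemma, so my job is to organize it into a clean proof and handle the second (``$C$ computable by each party separately") statement carefully.

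First I would argue that if common randomness extraction is possible from the distributed SV source, then for every adversary strategy the resulting joint process admits common randomness extraction, since the extractors $\Gamma_n, \Lambda_n$ must work against all strategies. In particular, for every fixed $s \in \mathcal S$, the strategy ``always pick $s_i = s$" yields an i.i.d.\ source with distribution $p_s(ab)$. By Theorem~\ref{thm:Witsenhausen} we must have $\rho_s(A,B)=1$, and hence by Lemma~\ref{lem:bipartite-graph} each bipartite graph $\mathcal G_s$ has at least two non-singleton connected components.

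Next, I would consider the randomized adversary strategy that, in every time step, picks $S_i$ uniformly at random from $\mathcal S$, independently of everything else. This yields i.i.d.\ samples from $q(ab):=\frac{1}{|\mathcal S|}\sum_s p_s(ab)$. Now the bipartite graph associated with $q$ has an edge between $a$ and $b$ precisely when $q(ab)>0$, which holds iff $p_s(ab)>0$ for some $s$, which is exactly the edge set of $\bar{\mathcal G}=\bigcup_s \mathcal G_s$. Applying Theorem~\ref{thm:Witsenhausen} and Lemma~\ref{lem:bipartite-graph} to $q$, extractability forces $\bar{\mathcal G}$ to have at least two non-singleton connected components. (A small subtlety: in the common randomness extraction problem the parties may be allowed private randomness to simulate the uniform choice of $S_i$, but by Lemma~\ref{lemma1} this can be absorbed without loss of generality; alternatively, any deterministic extractor must work against every deterministic adversary strategy, hence against the average too.)

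For the second claim, I would show that the index $C$ of the connected component of $\bar{\mathcal G}$ containing $(A,B)$ is a function of $A$ alone, and also of $B$ alone. Observe that with probability one the realized pair $(A,B)$ satisfies $p_s(AB)>0$ for some $s$, so $(A,B)$ is an edge of $\bar{\mathcal G}$. Both endpoints of an edge lie in the same connected component, so the component containing $A$ in $\bar{\mathcal G}$ (viewed as a vertex on the Alice side) equals the component containing $B$, and both equal $C$. Thus Alice, knowing $A$, looks up the connected component of $\bar{\mathcal G}$ containing $A$ to compute $C$, and similarly for Bob. Since $\bar{\mathcal G}$ depends only on the fixed distributions $p_s$, this lookup is well-defined and deterministic.

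The ``main obstacle" is really only a bookkeeping one: making precise that ``extraction must work against every strategy" lets us plug in the constant strategy and the uniform strategy, and that the latter uses only independent private randomness of the adversary and is therefore a legitimate strategy within Definition~\ref{def:dist-SV} (the adversary is allowed to randomize its choice of $s_i$). Once this is said, both conclusions follow directly from Theorem~\ref{thm:Witsenhausen} and Lemma~\ref{lem:bipartite-graph}, with no further technical work required.
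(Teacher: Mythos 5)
Your proposal is correct and follows essentially the same route as the paper: the paper also derives the lemma by plugging in the constant strategy $s_i=s$ and the uniform i.i.d.\ strategy (yielding $q(ab)=\frac{1}{|\mathcal S|}\sum_s p_s(ab)$, whose bipartite graph is exactly $\bar{\mathcal G}$) and then invoking Theorem~\ref{thm:Witsenhausen} together with Lemma~\ref{lem:bipartite-graph}. Your explicit treatment of why $C$ is computable from $A$ alone and from $B$ alone (every realized pair is an edge of $\bar{\mathcal G}$) matches the content of Lemma~\ref{lem:bipartite-graph} applied to $q$, so no gap remains.
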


\begin{definition}\label{def:SV-common}
The random variable $C$ defined in Lemma~\ref{lem:SV-common} is called the common data of the distributed SV source.
\end{definition}

\begin{example}
\label{ex10}
Consider the following two joint distribution on $A$ and $B$. The graph corresponding to both of these distributions has three connected components. But if we superimpose these two distributions over each other (by choosing each with probability half), the graph of the resulting distribution has only two connected components.
$$
\begin{tabular}{ c c | c | c |  c | c |}
\multicolumn{6}{c}{$\qquad B$}\\
\multicolumn{2}{c}{}&\multicolumn{1}{c}{$1$} &\multicolumn{1}{c}{$2$}&\multicolumn{1}{c}{$3$}&\multicolumn{1}{c}{$4$}\\
\cline{3-6}
&$1$&  $0.1$ & $0$ & $0$ & $0$  \\\cline{3-6}
A&$ 2$& $0$ & $0.2$ & $0$ & $0$  \\\cline{3-6}
&$ 3$& $0$ & $0$ & $0.1$ & $0.1$  \\\cline{3-6}
 &$4$& $0$ & $0$ & $0.3$ & $0.2$  \\\cline{3-6}
\end{tabular}\qquad\qquad
\begin{tabular}{ c c | c | c |  c | c |}
\multicolumn{6}{c}{$\qquad B$}\\
\multicolumn{2}{c}{}&\multicolumn{1}{c}{$1$} &\multicolumn{1}{c}{$2$}&\multicolumn{1}{c}{$3$}&\multicolumn{1}{c}{$4$}\\
\cline{3-6}
&$1$&  $0.2$ & $0$ & $0$ & $0$  \\\cline{3-6}
A&$ 2$& $0.1$ & $0.1$ & $0$ & $0$  \\\cline{3-6}
&$ 3$& $0$ & $0$ & $0.3$ & $0$  \\\cline{3-6}
 &$4$& $0$ & $0$ & $0$ & $0.3$  \\\cline{3-6}
\end{tabular}
$$
\end{example}

\subsection{Common random bit extraction from distributed SV sources}

We now have all the required tools to state and prove our main result about common randomness extraction from distributed SV sources.
\begin{theorem}\label{thm:main-distributed}
Consider a distributed SV source (as in Definition~\ref{def:dist-SV}) with corresponding sets $\mathcal S$, $\mathcal A$, and $\mathcal B$ and corresponding distributions $p_s(ab)$.  Let $C$ be the common data of the distributed SV source (as in Definition~\ref{def:SV-common}).
Let $p_s(abc)$ denote the induced joint distribution of $A$, $B$, and $C$.                                                                                                                                                                                                                                                                                                                                                                                                                                                                                                                                                                                                                                                                                                                                                                                                                                                                                                                                                                                                                                                                                                                                                  Suppose that there is no non-zero function $\psi:\mathcal C\rightarrow \mathbb R$ such that $\E_{(s)}[\psi(C)]=0$ for all $s$.
Then common randomness cannot be extracted from this distributed SV source.
\end{theorem}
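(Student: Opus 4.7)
The plan is to adapt the recursion-plus-continuous-function machinery from the proof of Theorem~\ref{thm:necessary-condition}, with the triple-tracking strategy and the Witsenhausen-style quadratic calculation from the proof of Theorem~\ref{thm:Witsenhausen} plugged into it. By Lemma~\ref{lemma1} I may assume the extractors are deterministic, given by subsets $\mathcal I\subseteq\mathcal A^n$ and $\mathcal J\subseteq\mathcal B^n$. To each adversary strategy $\sigma$ I attach the triple $(x^{\sigma},y^{\sigma},z^{\sigma})=(\Pr_{\sigma}[A^n\in\mathcal I],\Pr_{\sigma}[B^n\in\mathcal J],\Pr_{\sigma}[A^n\in\mathcal I,B^n\in\mathcal J])$; common randomness extraction is possible iff for some $\mathcal I,\mathcal J$ every such triple lies in a small ball around $(1/2,1/2,1/2)$, so it suffices to exhibit, for every $\mathcal I,\mathcal J$, a single strategy whose triple is bounded away from $(1/2,1/2,1/2)$.

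Exactly as in both prior proofs, the set $\Phi_n(\mathcal I,\mathcal J)$ of triples achievable over strategies satisfies
$$\Phi_n(\mathcal I,\mathcal J)=\Bigl\{\sum_{a,b}p_s(ab)\,\mathbf v^{(a,b)}\ :\ s\in\mathcal S,\ \mathbf v^{(a,b)}\in\Phi_{n-1}(\mathcal I_a,\mathcal J_b)\Bigr\},$$
with base case $\Phi_0(\mathcal I_0,\mathcal J_0)\in\{\mathbf e_0,\mathbf e_1,\mathbf e_2,\mathbf e_3\}=\{(1,1,1),(1,0,0),(0,1,0),(0,0,0)\}$ corresponding to the four trivial $n=0$ subsets. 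I will construct a continuous $F:[0,1]^3\to\mathbb R$ with (i)~$F(1/2,1/2,1/2)<0$, (ii)~$F(\mathbf e_i)\geq 0$ for every corner, and (iii)~whenever $F(\mathbf v^{(a,b)})\geq 0$ for all $(a,b)$, some $s\in\mathcal S$ satisfies $F\bigl(\sum_{a,b}p_s(ab)\mathbf v^{(a,b)}\bigr)\geq 0$. An induction on $n$ based on (ii) and (iii) then yields $\Phi_n(\mathcal I,\mathcal J)\cap\{F\geq 0\}\neq\emptyset$ for every $\mathcal I,\mathcal J$, and by (i) together with continuity this rules out containment of $\Phi_n(\mathcal I,\mathcal J)$ in a small ball around $(1/2,1/2,1/2)$.

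The candidate $F$ combines two ingredients. The first is a Witsenhausen quadratic $F_0(x,y,z)=\bar\rho(x+y-x^2-y^2)-2(z-xy)$ with $\bar\rho=\max_{s,c}\rho_s(A,B\mid C=c)$; the strict bound $\bar\rho<1$ is precisely the content of \eqref{eq:rho-abc-l-1} applied slicewise to each $p_s$, and uses Lemma~\ref{lem:SV-common}'s identification of $C$ as the common data of the source. The second is an SV-style perturbation $\epsilon[\tilde f(x)+\tilde f(y)]$ modeled on the anti-concentration function $g_\epsilon(x)=x+\epsilon\tilde f(x)$ from the proof of Theorem~\ref{thm:necessary-condition} applied to the induced generalized SV source on $C$, which by hypothesis admits no non-zero $\psi$ with $\E_{(s)}[\psi(C)]=0$. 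The candidate is $F(x,y,z)=F_0(x,y,z)+\epsilon[\tilde f(x)+\tilde f(y)]$. Property~(i) follows from $F_0(1/2,1/2,1/2)=(\bar\rho-1)/2<0$ for small $\epsilon$, and (ii) holds because both $F_0$ and $\tilde f$ vanish at each of the four corners.

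The main obstacle is verifying (iii). Decomposing $\sum_{a,b}p_s(ab)\mathbf v^{(a,b)}=\sum_c p_s(c)\mathbf u_c$ with $\mathbf u_c=\sum_{(a,b)\in c}p_s(ab\mid c)\mathbf v^{(a,b)}$, the Witsenhausen identity $F_0(\E V)-\E F_0(V)=\bar\rho(\Var X+\Var Y)-2\,\mathrm{Cov}(X,Y)$ combined with the law of total covariance splits the $F_0$-deficit into a within-$c$ part (nonnegative by the conditional bound $\bar\rho$) and an across-$c$ part depending on $c\mapsto\mathbf u_c$ which may be negative because the common data $C$ is perfectly self-correlated. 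The SV perturbation must absorb this across-$c$ deficit; this reduces to inequality~\eqref{eq:inequality-cs-2} for $g_\epsilon$ applied with $T(c)$ taken to be one of the first two coordinates of $\mathbf u_c$, together with the strict-positivity compactness step from the proof of Theorem~\ref{thm:necessary-condition} that produces a uniform $\Delta>0$. The genuinely delicate part is coordinating the adversarial choice of $s$ across the Witsenhausen inequality (which needs the within-$c$ conditional distribution) and the SV inequality (which operates on the across-$c$ marginal) while keeping $\epsilon$ small enough to preserve (i); this should be achievable by bounding the across-$c$ Witsenhausen deficit by the product of standard deviations of $\xi,\eta$ (the first two coordinates of $\mathbf u_c$) and balancing it against the linear-in-$\epsilon$ SV gain of magnitude $\sim\epsilon\,r\,\Delta$.
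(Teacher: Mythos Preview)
Your high-level architecture---recursion on $\Phi_n$, a continuous test function $F$, induction---is exactly the paper's framework. But two concrete steps of your plan do not go through.

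\textbf{The bound $\bar\rho<1$ is not automatic.} The common part $C$ is defined via the union graph $\bar{\mathcal G}=\bigcup_s\mathcal G_s$; for a fixed $s$, the graph $\mathcal G_s$ can be strictly more disconnected than $\bar{\mathcal G}$ (see Example~\ref{ex10}), so $p_s(\cdot\,|\,C=c)$ may still have nontrivial common data and hence $\rho_s(A,B\mid C=c)=1$. Equation~\eqref{eq:rho-abc-l-1} only controls the graph of the \emph{given} distribution relative to its own common part, not relative to the coarser common part of the union. The paper handles this with a preliminary reduction (Lemma~\ref{lem:special-case}): it replaces each $p_s$ by a mixture $p'_s=(1-\tau)p_s+\tfrac{\tau}{|\mathcal S|}\sum_{s'}p_{s'}$, which forces all $\mathcal G'_s$ to coincide with $\bar{\mathcal G}$, yields $\rho(A,B\mid CS)<1$ and $p_s(a),p_s(b)>0$, and preserves the hypothesis on $\psi$.

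\textbf{The balancing you propose fails quantitatively.} Your across-$c$ Witsenhausen deficit is $(\bar\rho-1)(\Var\xi+\Var\eta)+2(\sqrt{\Var\xi\,\Var\eta}-\mathrm{Cov}(\xi,\eta))$, which for $\xi=\eta$ equals $2(\bar\rho-1)r^2$ with $r^2=\Var\xi$; but your SV perturbation gain is $\sim\epsilon r\Delta$, and $\epsilon$ must be $O(1-\bar\rho)$ to keep $F(1/2,1/2,1/2)<0$. For $r$ of order~$1$ the quadratic deficit dominates the linear gain. (There is a further wrinkle: $\xi_c=\E_{(cs)}[X]$ depends on $s$, so it is not a fixed function of $c$ to which~\eqref{eq:inequality-cs-2} directly applies.) The paper does \emph{not} absorb the across-$c$ loss with an SV-style perturbation. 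Instead it uses the purely quadratic
\[
f(x,y,z)=M(x+y)-2(M+\epsilon)z+2xy-(1-\epsilon)(x^2+y^2)
\]
and a two-level decomposition of $X$: first $X=X'+X''$ with $X'\in\mathcal L_A=\{X:\E_{(s)}[X]\text{ independent of }s\}$, and the large $M$ together with a compactness bound $\max_{s,s'}\E_{(s)}[X'']-\E_{(s')}[X'']\ge\Delta\|X''\|_*$ (Lemma~\ref{lem:Delta}) disposes of $X''$. Then, for $X'\in\mathcal L_A$, one writes $X'=U+U'$ with $U$ a function of $C$ and $U'$ conditionally mean-zero. The hypothesis ``no non-zero $\psi$'' is used here, not via an SV perturbation, but via the purely linear-algebraic fact $\mathcal L'_A\cap\mathcal K_A=\{0\}$, which by compactness gives $\|U'\|_{(s)}\ge\Delta'\|U\|_{(s)}$ (Lemma~\ref{lem:Delta-prime}). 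Now both the within-$c$ Witsenhausen \emph{gain} $(1-\epsilon-\rho)\|U'\|^2$ and the across-$c$ \emph{loss} $\epsilon\|U\|^2$ are quadratic, and the former absorbs the latter once $\epsilon\le\Delta'(1-\rho)/(1+\Delta')$. The crucial idea you are missing is that the theorem's hypothesis forces any $X'\in\mathcal L_A$ to have within-$c$ fluctuation comparable to its across-$c$ fluctuation, so the across-$c$ deficit is covered by the within-$c$ correlation gap rather than by a first-order SV term.
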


An algorithm to extract common random bits is to focus on the common part $C$ that can be computed by both Alice and Bob. Indeed $C$ itself can be thought of as a generalized SV source. If deterministic randomness extraction from $C$ is possible, then Alice and Bob can obtain a common random bit by individually applying the randomness extraction protocol. Comparing with Theorems~\ref{thm:martingale}~and~\ref{thm:necessary-condition}, and assuming $p_s(c)>0$ for all $s, c$, the above theorem states that a common random bit can be extracted if and only if deterministic randomness extraction from $C$ is possible.

The proof of this theorem is essentially obtained by combining the ideas developed in the proofs of Theorems~\ref{thm:necessary-condition}~and~\ref{thm:Witsenhausen}. We present a detailed proof in Appendix~\ref{appendixD}.

\section{Future Work}\label{sec:future-work}

In this paper we completely characterized the randomness extraction problem for non-degenerate cases. A future work could be to solve this problem for the degenerate cases. 
In the degenerate cases, for generalized non-distributed sources Corollary~\ref{cor:necessary} gives a mildly stronger necessary condition than
Theorem~\ref{thm:necessary-condition}, but there is still a gap between this necessary condition 
and the sufficient condition of Theorem~\ref{thm:martingale}.

We note that our randomness extractor in Theorem~\ref{thm:martingale} extracts a bit
whose bias is inverse polynomially small in the length of the source sequence. 
It is interesting to see if this extractor could be improved to yield a bit with an exponentially small bias.
Furthermore, if we want to produce more than one bit of randomness, the tradeoff between the number of produced random bits and their quality is open.

Another interesting problem is to look at efficient adversaries, similar to the work of~\cite{ACMPS}. Our proofs only show existence of inefficient adversaries.

Another way to restrict the adversary is to put limitations on the number of times the adversary can choose a strategy $s\in\mathcal{S}$, i.e.\ there can be a cost associated to each strategy~$s$. 

A different type of limitation can be on the adversary's knowledge about the sequence generated so far.
More specifically, the adversary might have  \emph{noisy or partial} access to the previous outcomes in the sequence (these sources are called ``active sources" \cite{Sahai2}). These sources model adversaries with limited memory. Space bounded sources have been studied in \cite{Kamp,Vazirani87}.

\appendix

\section{Exact bias of deterministic extractors for the SV source}\label{appendixC}

SV sources were originally defined in the binary case~\cite{SanthaVazirani}. Such a source is specified by two distributions, i.e., $\mathcal S=\{0,1\}$, over $\mathcal C=\{0,1\}$ with
$$p_0(0)=\delta,\qquad \text{ and } \qquad p_1(0)=1-\delta,$$
where $0<\delta<1/2$. This is proved in~\cite{SanthaVazirani} and can also be concluded from Theorem~\ref{thm:necessary-condition} that randomness extraction from this SV source is impossible. Our goal in this appendix is to exactly characterize the set $\cup_{n}\Phi_n$ for this source where $\Phi_n$ is defined in the proof of Theorem~\ref{thm:necessary-condition}.

Let us describe our problem here more precisely.
\begin{definition}
 Fix an algorithm for extracting randomness from the binary SV source with parameter $\delta$. Let $\alpha$ be the minimum of the probability of the extracted bit being 0, where the minimum is taken over all adversary's strategies. Similarly let $\beta$ be the maximum of this probability over all strategies of the adversary.  
We call $(\alpha, \beta)$ the pair associated with the extractor.
Define
$H_\delta$ be the set of all such pairs $(\alpha, \beta)$ over all possible extractors.
\end{definition}
Our goal is to determine the set $H_\delta$.

To state the result we need some notation.
\begin{definition}
 Fix $0<\delta<1$.
For $x_1, x_2, \ldots x_n \in \{0,1\}$, define
$$(0.x_1 x_2 \ldots x_n)_\delta = \sum_{i=1}^n\, x_i (1-\delta)^i \big(\frac{\delta}{1-\delta}\big)^{s_x(i)},$$
where
$$s_x(i) = \sum_{j=1}^{i-1}x_j.$$
\end{definition}
Observe that when $\delta=1/2$, we get the standard binary expansion.

\begin{definition}
For two pairs $(\alpha_1, \beta_1)$ and $(\alpha_2, \beta_2)$ of real numbers we say that $(\alpha_1, \beta_1)$ \emph{dominates} $(\alpha_2, \beta_2)$ if $\alpha_1 \le \alpha_2$ and $\beta_1 \ge \beta_2$.
\end{definition}

The set $H_{\delta}$ can be characterized using the following proposition that implicitly appears in the conference version of \cite{SanthaVazirani} (at the beginning of their sketch of proof of their Theorem~6).

As mentioned in the introduction, a deterministic extractor has a corresponding  depth-$n$ binary tree, with leaves marked by either 0 or 1.

\begin{proposition}[\cite{SanthaVazirani}]\label{SV-thm-conf}  Assume that the depth-$n$ binary tree associated with the deterministic extractor has exactly $x$ leaves that are marked with bit 0. Let $x = (x_1 \ldots x_n)_2$ be the binary expansion of $x$.
Then the maximum probability $y$ that the extracted bit is 0 is at least $(0.x_1 \ldots x_n)_{\delta}$,
and the equality occurs when the $x$ leaves of value 0 form a left prefix of all leaves,
i.e., they appear consecutively from the leftmost leaf towards right
(or in other words the extractor assigns 0 to the sequence $(y_1, \cdots, y_n)$ iff $(y_1, \cdots, y_n)_2 < x.$)
\end{proposition}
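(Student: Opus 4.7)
The plan is to view each extractor as a depth-$n$ binary tree whose leaves are marked $0$ or $1$, and, for every node $v$, to define $\beta(v)$ as the supremum over adversary strategies of the probability of reaching a $0$-leaf starting from $v$. Because at each internal node the adversary can only pick one of the two coins (biases $\delta$ or $1-\delta$ toward the left child), the recursion
\[
\beta(v) \;=\; (1-\delta)\max(\beta(v_L),\beta(v_R)) + \delta\min(\beta(v_L),\beta(v_R))
\]
holds, with $\beta(\text{leaf})\in\{0,1\}$ according to the label, and the quantity $y$ in the proposition equals $\beta(\text{root})$. Writing $F_n(z):=(0.z_1\ldots z_n)_\delta$, one reads off the self-similar recursion $F_n(z)=(1-\delta)F_{n-1}(z)$ for $z\leq 2^{n-1}$ and $F_n(z)=(1-\delta)+\delta F_{n-1}(z-2^{n-1})$ otherwise; the goal becomes to show $\beta(\text{root})\geq F_n(x)$, with equality for the left-prefix labeling.

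First I verify equality in the left-prefix case by induction on $n$. When $x_1=0$ (so $x<2^{n-1}$), the $x$ zeros all sit in the left subtree and themselves form its left-prefix, the right subtree has $\beta=0$, and the recursion together with the inductive hypothesis yields $\beta(\text{root})=(1-\delta)F_{n-1}(x)=F_n(x)$. When $x_1=1$, the left subtree is entirely $0$-labeled (so $\beta=1$) and the right subtree is a left-prefix with $x-2^{n-1}$ zeros, so $\beta(\text{root})=(1-\delta)+\delta F_{n-1}(x-2^{n-1})=F_n(x)$.

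For the general lower bound I induct on $n$. Let $x_L$ and $x_R$ be the zero counts in the two root subtrees; WLOG $x_L\geq x_R$. The inductive hypothesis gives $\beta(v_L)\geq F_{n-1}(x_L)$ and $\beta(v_R)\geq F_{n-1}(x_R)$, and since $F_{n-1}$ is non-decreasing while the operation $(u,v)\mapsto (1-\delta)\max(u,v)+\delta\min(u,v)$ is monotone in each argument, the recursion delivers $\beta(\text{root})\geq (1-\delta) F_{n-1}(x_L) + \delta F_{n-1}(x_R)$. Thus the proposition reduces to the purely numerical inequality $(1-\delta)F_{n-1}(a)+\delta F_{n-1}(b)\geq F_n(a+b)$ for all $a\geq b$ with $a,b\in[0,2^{n-1}]$.

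The main obstacle is this numerical inequality. I would prove it by a simultaneous induction on $m$ of two claims that match the two regimes $a+b\leq 2^m$ and $a+b>2^m$: the claim $\phi_m$ asserts $F_m(a+b)-F_m(a)\leq \tfrac{\delta}{1-\delta}F_m(b)$ when $a\geq b$ and $a+b\leq 2^m$, and the claim $\psi_m$ asserts $(1-\delta)(1-F_m(a))\leq \delta(F_m(b)-F_m(a+b-2^m))$ when $a\geq b$, $a+b>2^m$, and $a,b\leq 2^m$. Plugging the self-similar recursion into each and splitting into subcases according to where $a$, $b$, and $a+b$ sit relative to $2^{m-1}$, every subcase collapses either to $\phi_{m-1}$, to $\psi_{m-1}$, or to the subadditivity $F_{m-1}(a+b)\leq F_{m-1}(a)+F_{m-1}(b)$ (which itself follows from $\phi_{m-1}$ together with $\delta\leq 1/2$, since then $\tfrac{\delta}{1-\delta}\leq 1$). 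The one subtle subcase arises when the natural substitution $a\mapsto a-2^{m-1}$ breaks the ordering $a\geq b$, forcing the inductive hypothesis to be applied to $\psi_{m-1}$ with the roles of the two arguments swapped; the resulting slack is then absorbed using once more $\delta\leq 1/2$. With the numerical inequality in hand, the lower bound follows, and together with the left-prefix computation from the second paragraph this completes the proof.
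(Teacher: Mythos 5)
Your proposal is correct, and the skeleton (tree recursion for the adversary's optimum, reduction to a numerical inequality about base-$\delta$ expansions, induction on depth) matches the paper's. The genuine difference is in how the key numerical inequality $(1-\delta)F_{n-1}(a)+\delta F_{n-1}(b)\geq F_n(a+b)$ for $a\geq b$ --- which is exactly the unproved inequality \eqref{eqn:SV-do-not-prove} that the paper identifies as the crux --- gets established. The paper proves it via Lemma~\ref{lemma:basedelta}, i.e.\ $(0.x)_\delta+\tfrac{\delta}{1-\delta}(0.y)_\delta\geq(0.z)_\delta$ under a no-overflow condition, handling the overflow case by prepending a zero digit; the proof there proceeds by first normalizing to $x_i\geq y_i$ for all $i$ via suffix swaps and then analyzing the leading carry block. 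Your $\phi_m$ is precisely Lemma~\ref{lemma:basedelta} restated for integers, and your $\psi_m$ is the carry regime handled as a separate, simultaneously-inducted claim; the induction step runs on the self-similar recursion $F_n(z)=(1-\delta)F_{n-1}(z)$ or $(1-\delta)+\delta F_{n-1}(z-2^{n-1})$ rather than on digit manipulations. I checked the subcases: for $\phi_m$ they reduce to $\phi_{m-1}$, $\psi_{m-1}$, or subadditivity of $F_{m-1}$ (which follows from $\phi_{m-1}$ and $\tfrac{\delta}{1-\delta}\leq 1$), and for $\psi_m$ the subcase $b\leq 2^{m-1}$ is indeed the delicate one where the order of $a-2^{m-1}$ and $b$ can flip, but applying $\psi_{m-1}$ to the reordered pair and using $\tfrac{1-\delta}{\delta}\geq 1$ closes it exactly as you describe. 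Your route trades the paper's digit-swapping normalization for the bookkeeping of two intertwined claims; it is arguably more systematic, and you also make explicit (via monotonicity of $F_{n-1}$ and of $(u,v)\mapsto(1-\delta)\max(u,v)+\delta\min(u,v)$) the step from the adversary's optimal play to the numerical inequality, which the paper delegates to the citation of Santha--Vazirani. One small point worth writing down if you flesh this out: extend $F_m$ to the boundary value $F_m(2^m):=1$ so that the cases $a+b=2^m$ in $\phi_m$ and $a=2^m$ in $\psi_m$ are covered.
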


The following Corollary implies Figure~\ref{Fig1} for the binary SV source with $\delta = 1/3$.
\begin{corollary}\label{thm:16} Let
$$F_\delta:= \big\{((0.x_1 \ldots x_n)_{1 - \delta}, (0.x_1 \ldots x_n)_{\delta}):\, \forall n, \forall x_1, \ldots x_n \in \{0,1\}\big\}\cup \big\{(1,1)\big\}.$$
Then $F_\delta\subseteq H_\delta$.
Furthermore, any $(\alpha, \beta)\in H_\delta$ is dominated by a pair in $F_\delta$.
\end{corollary}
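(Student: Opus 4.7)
My plan is to derive both inclusions directly from Proposition~\ref{SV-thm-conf} paired with a straightforward ``bit-swap'' observation and one elementary scalar identity. For any depth-$n$ extractor tree $T$ with $x$ zero-leaves, let $T^{\text{sw}}$ denote the same tree with the labels $0$ and $1$ on the leaves exchanged; then $T^{\text{sw}}$ has $y=2^n-x$ zero-leaves and $\beta(T^{\text{sw}})=1-\alpha(T)$ by definition. Applying Proposition~\ref{SV-thm-conf} to $T$ and to $T^{\text{sw}}$ gives $\beta(T)\ge(0.x_1\ldots x_n)_\delta$ and $\alpha(T)\le 1-(0.y_1\ldots y_n)_\delta$ for free, so the ``dominated by $F_\delta$'' direction will follow as soon as we verify the identity $1-(0.y_1\ldots y_n)_\delta=(0.x_1\ldots x_n)_{1-\delta}$ whenever $y=2^n-x$.

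I would prove the identity probabilistically. Conditioning on the smallest index $i$ with $B_i<x_i$ (which forces $x_i=1$, $B_i=0$) and unpacking the definition of $(0.\cdot)_\delta$ shows that for i.i.d.\ $B_i\sim\text{Bernoulli}(1-\delta)$, $(0.x_1\ldots x_n)_\delta=\Pr[(B_1\ldots B_n)_2<x]$, and analogously $(0.x_1\ldots x_n)_{1-\delta}=\Pr[(B'_1\ldots B'_n)_2<x]$ for i.i.d.\ $B'_i\sim\text{Bernoulli}(\delta)$. Substituting $B_i=1-B'_i$ turns $(B_1\ldots B_n)_2$ into $2^n-1-(B'_1\ldots B'_n)_2$, so the complementary event $\{(B_1\ldots B_n)_2\ge 2^n-x\}$ coincides with $\{(B'_1\ldots B'_n)_2<x\}$, which yields the identity.

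For the achievability inclusion $F_\delta\subseteq H_\delta$, the constant-$0$ extractor attains $(1,1)$, and for each $(x_1,\ldots,x_n)\in\{0,1\}^n$ I would take the left-prefix extractor $T_{n,x}$ outputting $0$ exactly on the inputs $(c_1,\ldots,c_n)$ with $(c_1\ldots c_n)_2<x$. The equality clause of Proposition~\ref{SV-thm-conf} immediately pins down $\beta(T_{n,x})=(0.x_1\ldots x_n)_\delta$. To obtain the matching $\alpha(T_{n,x})=(0.x_1\ldots x_n)_{1-\delta}$ exactly, I would topologically reflect the bit-swapped tree $T_{n,x}^{\text{sw}}$ (exchange left and right children at every internal node), which moves its $y=2^n-x$ zero-leaves from a right suffix to a left prefix; the reflected tree is precisely the left-prefix extractor $T_{n,y}$. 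The key observation is that reflection preserves the set of leaf distributions achievable by the SV adversary, because at each internal node it amounts to renaming the two dice $p_0\leftrightarrow p_1$, and the menu $\{p_0,p_1\}$ is invariant under this swap. Hence the Proposition's equality clause gives $\beta(T_{n,x}^{\text{sw}})=\beta(T_{n,y})=(0.y_1\ldots y_n)_\delta$, and the identity converts this into $\alpha(T_{n,x})=(0.x_1\ldots x_n)_{1-\delta}$.

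The domination inclusion is then immediate: for a depth-$n$ extractor with $x$ zero-leaves, the cases $x\in\{0,2^n\}$ trivially yield $(0,0)$ and $(1,1)$ in $F_\delta$, while for $0<x<2^n$ the two inequalities from the overall strategy paired with the identity show that $(\alpha,\beta)$ is dominated by the $F_\delta$-element indexed by the binary digits of $x$. I expect the main obstacle to lie in the achievability direction, since Proposition~\ref{SV-thm-conf} furnishes only an inequality for a general tree and equality only for left-prefix trees, so pinning down $\alpha(T_{n,x})$ exactly forces the reflection step above. Should that step prove awkward to phrase rigorously, a direct induction on $n$ splitting on whether $x_1=0$ or $x_1=1$ matches the recursion satisfied by $(0.x_1\ldots x_n)_{1-\delta}$ and would serve as a safe backup.
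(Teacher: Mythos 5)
Your proposal is correct and follows essentially the same route as the paper: both rest on Proposition~\ref{SV-thm-conf} together with the bit-swap symmetry (to bound $\alpha$ via $\beta$ of the complemented tree) and the left/right reflection symmetry (to get exact equality for left-prefix trees), which is precisely what the paper compresses into its two invocations of ``by symmetry.'' Your explicit probabilistic verification of the identity $1-(0.y_1\ldots y_n)_\delta=(0.x_1\ldots x_n)_{1-\delta}$ and of the reflection step fills in details the paper leaves as ``observe,'' and the inequalities you derive ($\alpha\le(0.x_1\ldots x_n)_{1-\delta}$, $\beta\ge(0.x_1\ldots x_n)_\delta$) are exactly the relationship the paper's proof establishes and Figure~\ref{Fig1} depicts.
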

\begin{proof}
By symmetry, the maximum probability that the extracted bit be 1 is minimized when all leaves with value 1 form a left prefix,
hence when the leaves with value 0 form a right prefix.
In other words (and again by symmetry), the minimum probability that the extracted bit be 0 is maximized when all leaves with value 0
form a left prefix.
Thus, both minimum of $y$ and maximum of $x$ occur when all 0-leaves
form a left prefix.
Observe that the pair $(x,y)$ associated to the tree having 0-leaves as a left prefix is
$((0.x_1 \ldots x_n)_{1 - \delta}, (0.x_1 \ldots x_n)_{\delta})$.
\end{proof}

Santha and Vazirani argue that Proposition \ref{SV-thm-conf} follows from inequality \eqref{eqn:SV-do-not-prove}, which is not proved in their paper. Lemma \ref{lemma:basedelta} below gives a proof for the inequality and hence the proposition.

\begin{lemma}
\label{lemma:basedelta}
 If \begin{equation}\label{eq:a+b=c}(0.x_1 \ldots x_n)_{1/2} + (0. y_1 \ldots y_n)_{1/2} = (0.z_1 \ldots z_{n})_{1/2}\end{equation}
and \begin{equation}\label{eq:a>b}(0.x_1 \ldots x_n)_{1/2} \ge (0.y_1 \ldots y_n)_{1/2},\end{equation}
then \begin{equation}\label{eq:result}(0.x_1 \ldots x_n)_\delta + \frac{\delta}{1-\delta}(0.y_1 \ldots y_n)_\delta \ge (0.z_1 \ldots z_n)_{\delta}.\end{equation}
\end{lemma}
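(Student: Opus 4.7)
My plan is to prove the lemma by strong induction on $n$, with case analysis on the most significant bits. The base case $n=1$ is immediate by inspection: the hypotheses $(0.x_1)_{1/2}+(0.y_1)_{1/2}=(0.z_1)_{1/2}$ and $x_1\ge y_1$ force $(x_1,y_1,z_1)\in\{(0,0,0),(1,0,1)\}$, and in both instances the target inequality collapses to an equality. For the inductive step, write $x=x_1\cdot 2^{n-1}+x'$ with $x'=(x_2,\ldots,x_n)_2$ and similarly for $y$ and $z$. Since $x+y=z$ (as integers in $[0,2^n)$) and $x\ge y$, only three triples $(x_1,y_1,z_1)$ can occur: $(0,0,0)$ with no carry and $x'\ge y'$; $(1,0,1)$ with no carry and no order constraint on $x',y'$; and $(0,0,1)$ with a carry from below, giving $x'+y'=2^{n-1}+z'$ with $x'\ge y'$. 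Throughout, I will repeatedly use the recursions $(0.0\,w)_\delta=(1-\delta)(0.w)_\delta$ and $(0.1\,w)_\delta=(1-\delta)+\delta(0.w)_\delta$ to unfold each case.

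The first two cases are easy. The $(0,0,0)$ case unfolds algebraically to the inductive hypothesis at length $n-1$ applied to $(x',y',z')$. The $(1,0,1)$ case unfolds to $(0.x')_\delta+(0.y')_\delta\ge(0.z')_\delta$; although $x',y'$ are no longer ordered, applying the IH to the larger of $x',y'$ first yields the stronger bound with coefficient $\tfrac{\delta}{1-\delta}$ on the other term, and one drops this factor using $\tfrac{\delta}{1-\delta}\le 1$, which relies on the implicit hypothesis $\delta\le 1/2$.

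The delicate case is $(0,0,1)$, which after one recursion step becomes an auxiliary inequality
\[
Q_m(a,b,c)\ :\quad (1-\delta)(0.a)_\delta+\delta(0.b)_\delta\;\ge\;(1-\delta)+\delta(0.c)_\delta
\]
for $m$-bit strings with $a+b=2^m+c$ and $a\ge b$ (which forces $a\ge 2^{m-1}$, so $a_1=1$). I would prove $Q_m$ by a second round of MSB case analysis. When $b_1=1$ one further splits on whether $a'+b'$ produces a carry at the next bit: the no-carry branch reduces to the lemma $L_{m-1}$ at bit-length $m-1$, while the carry branch reduces to $Q_{m-1}$. The remaining sub-case $b_1=0$ reduces to yet another auxiliary inequality
\[
R_m(a,b,c)\ :\quad (0.a)_\delta+(0.b)_\delta\;\ge\;1+(0.c)_\delta
\]
(again for $a+b=2^m+c$, now symmetric in $a,b$). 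An analogous MSB case analysis shows that $R_m$ likewise reduces to a mixture of $L_{m-1}$, $Q_{m-1}$, and $R_{m-1}$, so the triple $(L_n,Q_n,R_n)$ can be established by simultaneous strong induction on $n$, with easily verified base cases $L_1,Q_1,R_1$ (each is either vacuous or a single explicit check).

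The main obstacle I anticipate is the bookkeeping in the $(1,0)$ sub-case of $R_m$. After unfolding, one needs a bound of the form $(1-\delta)(B'-C')\ge\delta(1-A')$, whereas the inductive instance of $Q_{m-1}$ (applied in whichever order is consistent with $a'\ge b'$ or $b'>a'$) naturally yields the shape $\delta(B'-C')\ge(1-\delta)(1-A')$ or its swapped counterpart. Reconciling the two forms boils down to $\tfrac{1-\delta}{\delta}\ge\tfrac{\delta}{1-\delta}$, i.e., $\delta\le 1/2$; this is the same device that carries case $(1,0,1)$ through, and it is the only place where the restriction $\delta\le 1/2$ is genuinely used. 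Apart from this, every step is a routine algebraic unfolding via the two base-$\delta$ recursions.
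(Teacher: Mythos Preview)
Your proposal is correct and complete as a plan; the simultaneous induction on $(L_n,Q_n,R_n)$ with single-bit MSB unfolding does close. One small imprecision: in the $(1,0)$ sub-case of $R_m$ with $b'>a'$, the instance $Q_{m-1}(b',a',c')$ already has the \emph{same} left-hand side $\delta A+(1-\delta)B$ as the desired inequality, so all you need there is $(1-\delta)+\delta C\ge\delta+(1-\delta)C$, i.e.\ $(1-2\delta)(1-C)\ge 0$; the ratio trick $\tfrac{1-\delta}{\delta}\ge\tfrac{\delta}{1-\delta}$ is only needed in the $a'\ge b'$ branch. Also, $\delta\le 1/2$ is invoked in a few more places than you suggest (the $R_1$ base case and the $(1,1)$-no-carry sub-case of $R_m$), but these are equally routine.

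The paper takes a genuinely different route. Instead of introducing auxiliary inequalities to absorb carries, it first \emph{normalizes} the pair $(x,y)$: by successively swapping the suffixes of $x$ and $y$ at each position where $x_i<y_i$, it reduces to the case $x_i\ge y_i$ for every $i$, arguing that each swap can only decrease the left side of the target. Under this pointwise dominance the inductive step is organized by ``carry blocks''---maximal stretches within which a carry propagates in the addition $x+y=z$---and the leftmost block is disposed of in one move (after possibly flipping some $y_i,z_i$ to $1$) by zeroing all its bits simultaneously. This buys the paper a single inductive statement with no companion lemmas, at the cost of the swap argument and a slightly more global bookkeeping of carries. Your approach trades that cleverness for a longer but entirely mechanical case analysis; each step is a direct recursion-unfolding, which makes it easier to verify line by line.
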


\begin{remark} This lemma in particular shows that if $x = (x_1 \ldots x_n)_2, y = (y_1 \ldots y_n)_2, z = (z_1 \ldots z_{n+1})_2, x + y = z, x \ge y$,
then $(0.0x_1 \ldots x_n)_{1/2} + (0. 0y_1 \ldots y_n)_{1/2} = (0.z_1 \ldots z_{n+1})_{1/2}$, and hence  \begin{align}(1-\delta) (0.x_1 \ldots x_n)_\delta + \delta (0.y_1 \ldots y_n)_\delta \ge (0.z_1 \ldots z_{n+1})_\delta. \label{eqn:SV-do-not-prove}\end{align} This latter equation proves the induction step in the proof of
\cite{SanthaVazirani}.
\end{remark}

\begin{proof}[Proof of Lemma~\ref{lemma:basedelta}]
First, we show that without loss of generality we may assume $x_i \ge y_i$ for all $i$.
For the first $i$ for which $x_i \ne y_i$, we have $x_i > y_i$.
Consider the first $i$ for which $x_i < y_i$.
If for this $i$, we swap $x_j$ with $y_j$ for all $j \ge i$, we still have Equation (\ref{eq:a+b=c})
but $(0.x_1 \ldots x_n)_\delta + \frac{\delta}{1-\delta}(0.y_1 \ldots y_n)_\delta$  decreases  by
$$(1-\delta)^{i-1}((\frac{\delta}{1-\delta})^{s_x(i)} - (\frac{\delta}{1-\delta})^{s_y(i) + 1})((0.x_i \ldots x_n)_\delta - (0.y_i \ldots y_n)_\delta)$$
which is nonnegative because $s_x(i) \ge s_y(i) + 1$ and $(0.x_i \ldots x_n)_\delta \le (0.y_i \ldots y_n)_\delta$.
We can successively do these swaps until $x_i \ge y_i$ for all $i$.

Now we prove the lemma by induction on $n$.
Assume for the sake of contradiction that Inequality (\ref{eq:result}) does not hold.

If $z_1 = 0$, then $x_1 = y_1 = z_1 = 0$. Then we can remove $x_1, y_1, z_1$, decrease $n$ by 1, and prove the lemma using the induction hypothesis.  Thus, assume that $z_1=1$.
Now we can partition the indices $\{1, \ldots, n\}$ into blocks
such that in the addition of $(0.x_1 \ldots x_n)_{1/2}$ with $(0.y_1 \ldots y_n)_{1/2}$,
no carry is passed from one block to the next block, but within each block there is always a passed carry.
Consider the leftmost block that begins from index $1$ and ends at index $m$.

If $m = 1$, then we should have $x_1 = 1, y_1 = 0, z_1 = 1$.
If we change $x_1$ and $z_1$ to 0, then we still have Equation (\ref{eq:a+b=c}).
Also, Inequality (\ref{eq:a>b}) holds because $x_i \ge y_i$ for $i \ge 2$.
Furthermore,
both $(0.x_1 \ldots x_n)_{\delta}$ and $(0.z_1 \ldots z_{n})_{\delta}$ are decreased by $1-\delta$ and then multiplied by $\delta/(1-\delta) \le 1$,
while $(0.y_1 \ldots y_n)_{\delta}$ does not change.
Therefore, Inequality (\ref{eq:result}) holds, if and only if it holds after changing $x_1$ and $z_1$ to 0.
Since now $z_1 = 0$, we can use the induction hypothesis.

If $m > 1$, then we should have $x_1 = 0, x_2 = x_3 = \ldots = x_m = 1, y_1 = 0, z_1 = 1, y_2 = z_2, \ldots, y_{m-1} = z_{m-1}, y_m = 1, z_m = 0$.
Let $i$ be an index $\in [2,m-1]$ such that $y_i = 0$ (if such an $i$ exists.)
If we change $y_i$ and $z_i$ both to 1, then Equation (\ref{eq:a+b=c}) holds. Inequality (\ref{eq:a>b}) also holds since $x_i
\geq y_i$ for all $i$. Furthermore, $\frac{\delta}{1-\delta}(0.y_1 \ldots y_n)_\delta - (0.z_1 \ldots z_n)_\delta$ decreases by
\begin{align*}&\frac{\delta}{1-\delta}(1-\delta)^{i-1}(\frac{\delta}{1-\delta})^{s_y(i)}(1-\frac{\delta}{1-\delta}) ((0.y_{i} \ldots y_n)_{\delta})\\&\quad
-(1-\delta)^{i-1}(\frac{\delta}{1-\delta})^{s_z(i)}(1-\frac{\delta}{1-\delta}) ((0.z_{i} \ldots z_n)_{\delta})
\\&=\frac{\delta}{1-\delta}(1-\delta)^{i-1}(\frac{\delta}{1-\delta})^{s_y(i)}(1-\frac{\delta}{1-\delta}) ((0.y_{i} \ldots y_n)_{\delta}-(0.z_{i} \ldots z_n)_{\delta})
\end{align*}
because $s_y(i) = s_z(i) - 1$.
This decrease is nonnegative because $(0.y_{i+1} \ldots y_n)_{\delta}> (0.z_{i+1} \ldots y_n)_{\delta}$
(since $y_m > z_m$).
So, to prove the claim, without loss of generality we can assume $x_2 = \ldots = x_m = y_2 = \ldots = y_m = z_1 = \ldots = z_{m-1} = 1$. But now if we make the values of $x_i, y_i, z_i$ to 0 for all $i \in [1,m]$,
we still have Equality (\ref{eq:a+b=c}) and Inequality (\ref{eq:a>b}),
while this does not change the difference of the two sides of Inequality (\ref{eq:result}).
Given $z_1 = 0$, we can use the induction hypothesis as above.
\end{proof}

\section{Another proof of Theorem~\ref{thm:necessary-condition}}\label{app:proof1-necessary}
Consider the set of points $\{p_s(\cdot): s \in \mathcal S\}$ in the probability simplex.
Then, by assumption there is a point $q(\cdot)$ in the interior of the convex hull of these points.
Fix a deterministic extractor specified by a subset
$\mathcal I\subseteq {\mathcal C}^n$, i.e., if the observed $c^n$ is in $\mathcal I$ then the extracted bit is $0$, and otherwise it is $1$.
Consider the probability distribution $q^n(\cdot)$ on $\mathcal{C}^n$ that is the i.i.d. repetition of $q(\cdot)$.
Without loss of generality, assume that $q^n(\mathcal I)\ge 1/2$.
Let $\mathcal I_0\subseteq \mathcal I$ be a minimal subset such that $q^n(\mathcal I_0)\geq 1/2$. That is, let $\mathcal I_0\subseteq \mathcal I$ be such that $q^n(\mathcal I_0)\geq 1/2$ and no proper subset of $\mathcal I_0$ has this property.  Observe that for any $c^n\in \mathcal{C}^n$ we have $q^n(c^n)\leq 2^{-\Theta(n)}$. Therefore, by the definition of $\mathcal I_0$ we have $q^n(\mathcal I_0)=1/2+2^{-\Theta(n)}$.

Let $\widetilde p(\cdot)$ be a tweak of the distribution $q^n(\cdot)$ obtained as follows. Let $\epsilon>0$ be small constant, and define $\widetilde p(c^n) = (1+\epsilon)q^n(c^n)$ for $c^n\in \mathcal I_0$; also for $c^n\notin \mathcal I_0$ define $\widetilde p(c^n)=(1 - \epsilon -2^{-\Theta(n)}) q^n(c^n)$ to make $\widetilde p(\cdot)$ a probability distribution.

We claim that $\widetilde p(\cdot)$ is in the class of distributions associated with the generalized SV source, i.e., the adversary can choose a strategy to generate this distribution. Assuming this claim, observe that the probability that the extracted bit is $0$ would be equal to
$$\widetilde p(\mathcal I) \geq \widetilde p(\mathcal I_0)=(1+\epsilon)q^n(\mathcal I_0) \geq (1+\epsilon)/2.$$
Thus the adversary can force the bias of the extracted bit to be at least $\epsilon$. This would finish the proof.

What remains to show is that $\widetilde p(\cdot)$ can be generated by the adversary.
Observe that for any $\mathcal J\subseteq \mathcal C^n$ we have
$$\widetilde p(\mathcal J) = (1 + O(\epsilon))q^n(\mathcal J).$$
In particular, for any $c_1, \ldots, c_i$, we have
$$\widetilde p(C_1=c_1, \dots, C_i=c_i)= (1 + O(\epsilon)) \prod_{j=1}^i q({c_j}).$$
Therefore,
$$\widetilde p(C_i=c_i| C_1=c_1, \dots, C_{i-1})=\frac{1+O(\epsilon)}{1+O(\epsilon)} q(c_i) = (1 + O(\epsilon)) q(c_i).$$
Since $q(\cdot)$ is in the interior of the convex hull of $\{p_s(\cdot): s \in \mathcal S\}$, then for sufficiently small $\epsilon>0$,
any probability distribution of the form $((1 + O(\epsilon))q(\cdot)$
is in this convex hull too.
Thus, by definition $\widetilde p(\cdot)$ can be produced by the adversary.

\section{Proof of Lemma \ref{lemma1}}\label{app:lemma1}

We use the notation developed before the statement of Lemma~\ref{lemma1}. We assume that $K_1=\Gamma_n(A^n, R_1)$ and $K_2=\Lambda_n(B^n, R_2)$ are $\epsilon$-close to the uniform distribution over $\B$ and that $\Pr[K_1\neq K_2]<\epsilon$.
Define
$$K'_1=\Gamma'_n(A^n)=\text{argmax}_{k}\,\Pr[K_1=k|A^n],$$
and
$$K'_2=\Lambda'_n(B^n)=\text{argmax}_{k}\, \Pr[K_2=k|B^n].$$
Then $K'_1$ and $K'_2$ are (deterministic) functions of $A^n$ and $B^n$ respectively.
We claim that
$$\Pr[K'_1\neq K'_2]\leq 3\epsilon,$$
and
$$ \big| \Pr[K'_1=0]-\frac 12\big|, \big| \Pr[K'_2=0]-\frac 12\big|\leq 2\epsilon.$$
Proving these inequalities would complete the proof.

Observe that for every $A^n=a^n$ we have
$$\Pr[K'_1\neq K_1|A^n=a^n]=\min\{\Pr[K_1=0|a^n], \Pr[K_1=1|a^n]\}.$$
On the other hand,
\begin{align*}
\Pr[K_2\neq K_1|a^n]&=\Pr[K_2=1|a^n]\Pr[K_1=0|a^n]+\Pr[K_2=0|a^n]\Pr[K_1=1|a^n]\\
&\geq \min\{\Pr[K_1=0|a^n], \Pr[K_1=1|a^n]\}
\\&=\Pr[K'_1\neq K_1|A^n=a^n].
\end{align*}
As a result, we have
$$\Pr[K'_1\neq K_1]\leq \Pr[K_2\neq K_1]\leq \epsilon,$$
which gives
$$\big| \Pr[K'_1=0]-\frac 12\big|\leq 2\epsilon.$$
This inequality for $K'_2$ is proved similarly.

Next we have
\begin{align*}
\Pr[K'_1\neq K'_2]&\leq \Pr[K'_1\neq K_1]+\Pr[K_1\neq K_2]+\Pr[K_2\neq K'_2]
\\&\leq \epsilon+\epsilon+\epsilon.
\end{align*}

\section{Proof of Theorem~\ref{thm:main-distributed} }\label{appendixD}

First we show that it suffices to prove Theorem~\ref{thm:main-distributed} in the following special case.

\begin{lemma}\label{lem:special-case}
If Theorem~\ref{thm:main-distributed} holds in the special case where distributions $p_s(a, b)$ satisfy
\begin{align}\label{eq:psab-positive}
p_s(a), p_s(b)>0,\qquad\qquad \forall s, a, b.
\end{align}
and 
\begin{align}\label{eq:SV-conditional-rho}
\rho(A, B|CS):=\max_s \rho_s(A, B|C)<1,
\end{align}
where $\rho_s(A, B|C)$ denotes the conditional maximal correlation of $A$ and $B$ given $C$ with respect to the distribution $p_s(abc)$,
 then the theorem holds in general.

\end{lemma}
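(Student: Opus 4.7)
The plan is to reduce the general case to the special case by a small structure-preserving perturbation of the dice distributions. As a preliminary step, I restrict $\mathcal{A}$ (resp.\ $\mathcal{B}$) to those symbols $a$ (resp.\ $b$) that receive positive probability under \emph{some} $p_s$; removed symbols never appear in the source and so may be dropped without loss. After this restriction, let
\begin{equation*}
\bar p(ab) := \frac{1}{|\mathcal{S}|}\sum_{s' \in \mathcal{S}} p_{s'}(ab),
\end{equation*}
and for a sufficiently small $\epsilon > 0$ define the perturbed distributions
\begin{equation*}
\tilde p_s(ab) := (1 - \epsilon)\, p_s(ab) + \epsilon\, \bar p(ab), \qquad s \in \mathcal{S}.
\end{equation*}

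The next step is to verify that this perturbed distributed SV source satisfies \eqref{eq:psab-positive} and \eqref{eq:SV-conditional-rho} while having the same common data as the original. Condition \eqref{eq:psab-positive} is immediate: $\tilde p_s(a) \geq \epsilon\, \bar p(a) > 0$, and similarly for $b$. For the common data, the bipartite support graph of $\tilde p_s$ equals $\mathcal{G}_s \cup \bar{\mathcal{G}} = \bar{\mathcal{G}}$, hence the union of these graphs over $s$ is still $\bar{\mathcal{G}}$, and Definition~\ref{def:SV-common} yields the same $C$ for the perturbed source. Condition \eqref{eq:SV-conditional-rho} then follows: conditioning $\tilde p_s$ on $C = c$ gives a distribution whose bipartite support is $\bar{\mathcal{G}}$ restricted to the connected component $c$, which is itself connected by definition of $c$; by Lemma~\ref{lem:bipartite-graph} this forces $\rho_{\tilde p_s}(A, B \mid C = c) < 1$, and taking the maximum over the finitely many pairs $(s,c)$ preserves strict inequality.

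I then check that the hypothesis and the conclusion of Theorem~\ref{thm:main-distributed} transfer appropriately. For the hypothesis: if some non-zero $\psi:\mathcal{C}\to\mathbb{R}$ had $\mathbb{E}_{\tilde p_s}[\psi(C)] = 0$ for every $s$, averaging over $s$ would give $\mathbb{E}_{\bar p}[\psi(C)] = 0$, and substituting back would force $\mathbb{E}_{p_s}[\psi(C)] = 0$ for each $s$, contradicting the hypothesis for the original source. For the conclusion: any (possibly randomized) adversary for the perturbed source can be simulated by a randomized adversary for the original source which, at each step, uses the perturbed adversary's choice $s_i$ with probability $1-\epsilon$ and an independent uniformly random $s'\in\mathcal{S}$ with probability $\epsilon$, since $\tilde p_{s_i} = (1-\epsilon) p_{s_i} + \epsilon\cdot \tfrac{1}{|\mathcal{S}|}\sum_{s'} p_{s'}$. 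Consequently, every extractor that defeats all original-source adversaries also defeats all perturbed-source adversaries; applying the special case of Theorem~\ref{thm:main-distributed} to $\tilde p_s$ gives impossibility for the perturbed source, which contrapositively yields impossibility for the original.

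The main obstacle is securing the three structural properties simultaneously: (a) the common data must not change, so the perturbation may not introduce new edges into the support graph; (b) condition \eqref{eq:SV-conditional-rho} must actually be achieved, so each conditional support graph of $\tilde p_s$ restricted to a component of $\bar{\mathcal{G}}$ must stay connected; and (c) the perturbation must correspond to a restriction, not an enlargement, of adversary power, so the direction of the extractor reduction is correct. Mixing with $\bar p$, rather than with a uniform or otherwise full-support distribution, is the choice that makes all three hold at once, and identifying this choice is the technical core of the lemma.
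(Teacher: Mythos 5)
Your proposal is correct and follows essentially the same route as the paper's proof: both perturb each $p_s$ by mixing with the uniform average $\frac{1}{|\mathcal S|}\sum_{s'}p_{s'}$, observe that the perturbed dice lie in the convex hull of the originals (so the reduction goes in the right direction), and check that positivity, the common part, the conditional maximal correlation bound, and the non-existence of a balanced $\psi$ all transfer. No gaps.
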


\begin{proof}
For any $s\in \mathcal S$ define $p'_s(a, b)$ as follows: fix some (small) $\tau\in (0,1)$, and define 
\begin{align}\label{eq:def-qs}
p'_s(a, b):=(1-\tau)p_s(a, b) + \frac{\tau}{|\mathcal S|}\sum_{s'\in \mathcal S} p_{s'}(a, b).
\end{align}
Observe that $p'_s(\cdot)$ is some perturbed variant of $p_s(\cdot)$. Also note that $p'_s(\cdot)$ is in the convex hull of distributions $p_{s'}(\cdot)$ for different values of $s'$. Thus in each step, the adversary can enforce that the pair $(a, b)$ generated by the source has distribution $p'_s(a, b)$ via a randomized strategy. As a result, it suffices to show the impossibility of common randomness extraction from the distributed SV source with distributions $p'_s(\cdot)$ instead of $p_s(\cdot)$.
Now we only need to show that $p'_s(\cdot)$ satisfies~\eqref{eq:psab-positive} and~\eqref{eq:SV-conditional-rho} as well as the assumption of Theorem~\ref{thm:main-distributed}.

First, by the definitions of $p'_s(\cdot)$ given in~\eqref{eq:def-qs} the support of $p'_s(\cdot)$ does not depend on $s$. Therefore, without loss of generality we may assume that for any $a\in \mathcal A$ we have $p'_s(a)>0$. Similarly, we may assume that for any $b\in \mathcal B$ we have $p'_s(b)>0$. 

Second, it is not hard to see that the graph $\mathcal G'_s$ associated with distributions $p'_s(\cdot)$ is the same for all $s$ and is equal to the graph $\bar{\mathcal G}=\cup_s \mathcal G_s$ associated with the original distributions $p_s(\cdot)$. This in particular implies that $\bar{\mathcal G'}=\bar{\mathcal G}$ and the common part $C$ remains the same. Moreover, for any $s, c$, we have $\rho'_s(A, B| C=c)<1$ for distribution $p'_s(\cdot)$ because the connected components of the graph
$\bar{\mathcal G}$ are nothing but elements of $\mathcal C$.

We finally verify that there is no non-zero $\psi:\mathcal C\rightarrow \mathbb R$ such that $\E'_{(s)}[\psi(C)]=0$ where the expectation is computed with respect to $p'_s(\cdot)$. Suppose such a function $\psi$ exists. Then we have 
\begin{align}\label{eq:234}
0=\E'_{(s)}[\psi] = (1-\tau)\E_{(s)}[\psi] + \frac{\tau}{|\mathcal S|}\sum_{s'} \E_{(s')}[\psi],
\end{align}
where $\E_{(s)}[\cdot]$ denotes expectation with respect to $p_s(\cdot)$.
Summing the above equations for all $s\in \mathcal S$, we find that 
$$\sum_s \E_{(s)}[\psi]=0,$$
and then using~\eqref{eq:234} again we obtain $\E_{(s)}[\psi]=0$. Therefore by the assumption of Theorem~\ref{thm:main-distributed} the function $\psi$ should be zero. 

\end{proof}

By the above lemma, from now on we assume that the distributions $p_s(\cdot)$ satisfy the extra assumptions~\eqref{eq:psab-positive} and~\eqref{eq:SV-conditional-rho}.

Suppose that common random bit extraction is possible.
By Lemma \ref{lemma1} we may assume that Alice and Bob's protocol is deterministic and is described by subsets $\mathcal I\subseteq \mathcal A^n$ and $\mathcal J\subseteq \mathcal B^n$. That is, Alice's output is $K_1=0$ if $a^n\in \mathcal I$ and Bob's output is $K_2=0$ if $b^n\in \mathcal J$.

Let us define
\begin{align*}
&\alpha(\mathcal I):=\max \Pr[A^n\in\mathcal{I}],\\
&\beta(\mathcal J):=\max \Pr[B^n\in\mathcal{J}],\\
&\gamma(\mathcal I, \mathcal J):=\min \Pr[A^n\in\mathcal{I}, \, B^n\in\mathcal{J}],
\end{align*}
where the maximizations  and the minimization are computed over all strategies of the adversary.
If common randomness extraction is possible, then there are $n$ and $\mathcal I\subseteq \mathcal A^n$ and $\mathcal J\subseteq \mathcal B^n$
such that
all the three numbers $\alpha(\mathcal I), \beta(\mathcal J)$ and $\gamma(\mathcal I, \mathcal J)$ are close to $1/2$.

For $n\geq 1$ let $\Phi_n$ be the set of triples $(\alpha(\mathcal I), \beta(\mathcal J), \gamma(\mathcal I, \mathcal J))$ for all subsets $\mathcal I\subseteq \mathcal A^n$ and $\mathcal J\subseteq \mathcal B^n$. We also define
$$\Phi_0=\big\{\mathbf{e}_0=(1, 1, 1), \mathbf{e}_1=(1, 0, 0), \mathbf{e}_2=(0, 1, 0), \mathbf{e}_3=(0, 0, 0)\big\}.$$
As discussed in the proof of Theorem~\ref{thm:Witsenhausen} the set $\Phi_0$ corresponds to deterministic strategies where the parties do not look at the source at all.
By the above discussion we need to show that $(1/2, 1/2, 1/2)$ is far from $\cup_n \Phi_n$.

By the same ideas as in the proofs of Theorems~\ref{thm:necessary-condition} and~\ref{thm:Witsenhausen} the sets $\Phi_n$ can be computed recursively.
For every $a,b$ and  $\mathcal I\subseteq \mathcal A^n$ and $\mathcal J\subseteq \mathcal B^n$ define
$$\mathcal I_{a}:=\{a_{[2:n]}:\, (a, a_{[2:n]})\in \mathcal I\},\qquad\qquad \mathcal J_{b}:=\{b_{[2:n]}:\, (b, b_{[2:n]})\in \mathcal J\}.$$
Then we have
\begin{align*}
\alpha(\mathcal I)&=\max_s \E_{(s)}[\alpha(\mathcal I_{A})]\\
\beta(\mathcal J)&=\max_s \E_{(s)}[ \beta(\mathcal J_{B})],\\
\gamma(\mathcal I, \mathcal J)&=\min_s \E_{(s)}[ \gamma(\mathcal I_{A}, \mathcal J_{B})].
\end{align*}
As a result, the sets $\Phi_n$ can be characterized recursively as follows. $\Phi_n$ is indeed the set of triples $(x, y, z)$ for which there are functions $X(a)=x_a$, $Y(b)=y_b$ and $Z(ab)=z_{ab}$ such that for all $(a, b)$ we have $(x_a, y_b, z_{ab})\in \Phi_{n-1}$ and that
\begin{align}\label{eq:xsX-2}
x = \max_s  \E_{(s)}[X], \qquad y=\max_s \E_{(s)}[Y], \qquad z=\min_s\E_{(s)}[Z].
\end{align}

We now prove that $\Phi_n$ for every $n$ is far from $(1/2, 1/2, 1/2)$.

\begin{theorem} \label{lem:f-ineq}
Let
$$0<\epsilon\leq \frac{\Delta' (1-\rho)}{1+\Delta'},$$
and $M\geq 24|\mathcal A||\mathcal B|/\Delta+2$ where
$\Delta$ and $\Delta'$ are two positive constants that are specified later (in Lemmas \ref{lem:Delta} and \ref{lem:Delta-prime}). Define
$$f(x,y,z)=M(x+y)-2(M+\epsilon)z+2xy -(1-\epsilon)(x^2+y^2).$$
Then with the assumption of Theorem~\ref{thm:main-distributed} and~\eqref{eq:psab-positive} and~\eqref{eq:SV-conditional-rho}, for all functions $X, Y, Z$ as above, we have
$$f( x,  y,  z)\geq \min_s \E_{(s)}[f(X, Y, Z)],$$
where $ x,  y,  z$ are defined in~\eqref{eq:xsX-2}.
\end{theorem}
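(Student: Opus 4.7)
The plan is to adapt the tree-induction template used for Theorems~\ref{thm:necessary-condition} and~\ref{thm:Witsenhausen}. Fix $s^\ast \in \arg\min_s \E_{(s)}[f(X,Y,Z)]$ and set $\alpha^\ast = \E_{(s^\ast)}[X]$, $\beta^\ast = \E_{(s^\ast)}[Y]$, $\gamma^\ast = \E_{(s^\ast)}[Z]$. Then $\delta_x := x - \alpha^\ast$, $\delta_y := y - \beta^\ast$, $\delta_z := \gamma^\ast - z$ are all nonnegative by the defining max/min in~\eqref{eq:xsX-2}, and it suffices to prove $D := f(x,y,z) - \E_{(s^\ast)}[f(X,Y,Z)] \ge 0$. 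Substituting $x = \alpha^\ast + \delta_x$ etc.\ and expanding, the $M(x+y-2z)$ and $-2\epsilon z$ pieces contribute the clean positive term $M(\delta_x + \delta_y + 2\delta_z) + 2\epsilon \delta_z$, while the quadratic piece $2xy - (1-\epsilon)(x^2+y^2)$ splits into linear and quadratic corrections in the $\delta$'s (with coefficients bounded by an absolute constant, hence dominated once $M$ is large) plus the variance remainder $V := (1-\epsilon)(\Var_{(s^\ast)}[X] + \Var_{(s^\ast)}[Y]) - 2\,\text{Cov}_{(s^\ast)}(X,Y)$. So, up to an $O(\delta_x + \delta_y)$ loss absorbed by $M$, proving $D \ge 0$ reduces to controlling $V$ against $M(\delta_x + \delta_y + 2\delta_z)$.

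To analyse $V$, decompose across the common data $C$: set $X_1(c) := \E_{(s^\ast)}[X \mid C = c]$, $X_2 := X - X_1(C)$, and analogously $Y_1, Y_2$, so $\E_{(s^\ast)}[X_2 \mid C] = \E_{(s^\ast)}[Y_2 \mid C] = 0$ and $V = V_1 + V_2$ with $V_i$ built from the $(X_i, Y_i)$ pair. For $V_2$, the hypothesis $\rho := \rho(A,B\mid CS) < 1$, together with the defining property of conditional maximal correlation applied inside each $c$-conditional followed by Cauchy--Schwarz and AM--GM, yields $|2\E_{(s^\ast)}[X_2 Y_2]| \le \rho (\E_{(s^\ast)}[X_2^2] + \E_{(s^\ast)}[Y_2^2])$, and hence $V_2 \ge (1-\epsilon-\rho)(\E_{(s^\ast)}[X_2^2] + \E_{(s^\ast)}[Y_2^2]) \ge 0$ as long as $\epsilon \le 1-\rho$. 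For $V_1$, only the generic bound $|\text{Cov}(X_1,Y_1)| \le \tfrac12(\Var_{(s^\ast)}[X_1] + \Var_{(s^\ast)}[Y_1])$ is available, giving $V_1 \ge -\epsilon(\Var_{(s^\ast)}[X_1] + \Var_{(s^\ast)}[Y_1])$. This is the only possibly negative contribution and must be absorbed by the linear-in-$\delta$ surplus.

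To do so, invoke Lemma~\ref{lem:Delta}: the hypothesis of Theorem~\ref{thm:main-distributed} that no nonzero $\psi:\mathcal{C}\to\mathbb{R}$ has $\E_{(s)}[\psi]=0$ for all $s$, together with compactness on the finite set $\mathcal C$, produces a constant $\Delta > 0$ such that every $\psi$ with $\E_{(s^\ast)}[\psi]=0$ satisfies $\max_s|\E_{(s)}[\psi(C)]| \ge \Delta \|\psi\|_\infty$. Applied to $\psi = X_1 - \alpha^\ast$ (and to $\psi = Y_1 - \beta^\ast$), this produces a strategy $s'$ along which $\E_{(s')}[X_1]$ deviates from $\alpha^\ast$ (respectively $\E_{(s')}[Y_1]$ from $\beta^\ast$) by an amount of order $\sqrt{\Var_{(s^\ast)}[X_1]}$ (respectively $\sqrt{\Var_{(s^\ast)}[Y_1]}$). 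Since $\E_{(s')}[X] = \E_{(s')}[X_1(C)] + \E_{(s')}[X_2]$, this transfers to a lower bound on $\delta_x$ (respectively $\delta_y$) up to a slack $\sqrt{\E_{(s^\ast)}[X_2^2]}$ coming from the un-centredness of $X_2$ under $s' \ne s^\ast$; Lemma~\ref{lem:Delta-prime} packages this into an inequality of the form $\delta_x + \delta_y + \sqrt{\E_{(s^\ast)}[X_2^2] + \E_{(s^\ast)}[Y_2^2]} \gtrsim \Delta' \sqrt{\Var_{(s^\ast)}[X_1] + \Var_{(s^\ast)}[Y_1]}$. Combining the $M$-weighted gain from $\delta_x + \delta_y$, the $(1-\epsilon-\rho)$-weighted gain from $V_2$, and the $-\epsilon$-weighted loss from $V_1$ via Cauchy--Schwarz (to convert the square-root bound into a bound on the squared quantity $\Var_{(s^\ast)}[X_1] + \Var_{(s^\ast)}[Y_1]$), the numerical trade-off closes precisely under the hypotheses $\epsilon \le \Delta'(1-\rho)/(1+\Delta')$ and $M \ge 24|\mathcal A||\mathcal B|/\Delta + 2$, yielding $D \ge 0$.

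The main obstacle is the final transfer step: moving variance of the common-part functions $X_1, Y_1$ (functions on $\mathcal C$) into adversarial deviations $\delta_x, \delta_y$ (quantities on $\mathcal A$ and $\mathcal B$ separately). The natural identity $\E_{(s')}[X] = \E_{(s')}[X_1(C)] + \E_{(s')}[X_2]$ would be a direct link were it not for the nuisance term $\E_{(s')}[X_2]$, which is zero under $s^\ast$ but uncontrolled under the alternative $s'$ produced by Lemma~\ref{lem:Delta}. The saving grace is that this nuisance is bounded by $\sqrt{\E_{(s^\ast)}[X_2^2]}$, the very non-negative quantity that $V_2$ delivers with slack $(1-\epsilon-\rho)$; the theorem's precise bound on $\epsilon$ is exactly what is needed to balance these two effects against each other.
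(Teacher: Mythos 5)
Your overall architecture tracks the paper's proof quite closely: the same induction via $f$, a split of the quadratic remainder into a common-part piece and a conditionally centered piece, conditional maximal correlation for the latter, and a compactness constant derived from the non-degeneracy hypothesis for the former. The genuine gap is at the one point where you depart from the paper: you fix the reference strategy $s^\ast$ to be the minimizer of $\E_{(s)}[f(X,Y,Z)]$ at the outset and then try to convert $\Var_{(s^\ast)}[X_1]$ into a lower bound on $\delta_x+\delta_y$. The non-degeneracy hypothesis only guarantees that $\psi=X_1-\alpha^\ast$ has a large deviation \emph{in absolute value} under some $s'$; if that deviation is negative, i.e.\ $\E_{(s')}[X_1]<\alpha^\ast$, it contributes nothing to $\delta_x=\max_s\E_{(s)}[X]-\alpha^\ast$, which can equal $0$ while $\Var_{(s^\ast)}[X_1]$ is bounded away from $0$. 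Hence your key inequality $\delta_x+\delta_y+\sqrt{\E_{(s^\ast)}[X_2^2]+\E_{(s^\ast)}[Y_2^2]}\gtrsim\Delta'\sqrt{\Var_{(s^\ast)}[X_1]+\Var_{(s^\ast)}[Y_1]}$ is false. Concretely, take $\mathcal C=\{1,2\}$ with two dice satisfying $p_{s_1}(C=1)=0.5$ and $p_{s_2}(C=1)=0.3$ (the hypothesis of Theorem~\ref{thm:main-distributed} holds), and $X=Y=\mathbf 1[C=1]$, $Z(ab)=\mathbf 1[C(a)=1]\,\mathbf 1[C(b)=1]$; this triple even arises from $\Phi_0$ via $\mathbf e_0,\dots,\mathbf e_3$, so it is a legitimate instance of the induction. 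Then $f(X,Y,Z)\equiv 0$ pointwise, so every $s$ minimizes $\E_{(s)}[f]$ and you may be handed $s^\ast=s_1$; then $\delta_x=\delta_y=0$ and $X_2=Y_2=0$, while $\Var_{(s_1)}[X_1]=1/4$. The theorem survives in this example only because $\delta_z=0.2>0$, but your final combination never invokes $\delta_z$, and since $Z$ carries no usable relation to $X,Y$ in general you cannot reliably route the missing gain through it.

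The paper sidesteps this by never committing to the minimizing strategy: it writes $f(x,y,z)-\min_s\E_{(s)}[f]$ as a maximum over the reference strategy $s$ as well as the deviating strategies, first splits $X=X'+X''$ with $X'$ in the space $\mathcal L_A$ of functions whose mean is $s$-independent, and then \emph{chooses the pair} $(s_1,s)$ so that $\E_{(s_1)}[X'']-\E_{(s)}[X'']\ge\Delta\|X''\|_*$ — the reference $s$ is placed at the bottom of the deviation, which is exactly the freedom your fixed $s^\ast$ forfeits. After that reduction the residual common-part variance is not controlled by adversarial deviations at all: Lemma~\ref{lem:Delta-prime} bounds $\|U\|_{(s)}$ by $\|U'\|_{(s)}$ using $\mathcal L'_A\cap\mathcal K_A=\{0\}$, so the loss $-\epsilon\bigl(\|U\|_{(s)}^2+\|V\|_{(s)}^2\bigr)$ is absorbed by the $(1-\epsilon-\rho)\bigl(\|U'\|_{(s)}^2+\|V'\|_{(s)}^2\bigr)$ surplus rather than by $M(\delta_x+\delta_y)$. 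Note that this substitute transfer also fails in your setup: in the example above $X$ is a pure function of $C$, so $\E_{(s^\ast)}[X_2^2]=0$ cannot dominate $\Var_{(s^\ast)}[X_1]$ either; the reason the paper's Lemma~\ref{lem:Delta-prime} applies is that it is invoked only for $X'\in\mathcal L'_A$, i.e.\ after the $X''$ component (which carries all of $\mathbf 1[C=1]$ in the example) has been stripped off and paid for by the $M\Delta\|X''\|_*$ term. Restoring the freedom in the choice of reference strategy, together with the preliminary $\mathcal L_A\oplus\mathcal L_A^\perp$ split, appears essential.
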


Given this theorem we can finish the proof of Theorem~\ref{thm:main-distributed}. Observe that $f(\mathbf e_i)\geq 0$ for $0\leq i\leq 3$. Then by the above theorem and a simple induction, for  any $(x, y, z)\in \Phi_n$ we have $f(x, y, z)\geq 0$. We however have $f(1/2, 1/2, 1/2)= -\epsilon/2<0$. Then by the continuity of $f$, the point $(1/2, 1/2, 1/2)$ is far from $\Phi_n$ for any $n$.

The proof of Theorem \ref{lem:f-ineq} is the most technical part of this paper; its proof is given after stating some definitions and  lemmas.

\subsection{Some preliminary definitions and lemmas}

In this section, we let $\mathbb E_{(cs)}[\cdot]$ to be the expectation with respect to the conditional probability distribution $p_s(ab|c)$.

\vspace{.2in}
\noindent
\textbf{A characterization of conditional maximal correlation.}
 The assumption that $\rho=\rho(A; B|CS)<1$ implies the following for any fixed value of $s$:  take two arbitrary functions $X:\mathcal A\rightarrow \mathbb R$ and $Y: \mathcal{B}\rightarrow \mathbb R$ such that $$\mathbb E_{(cs)}[X]=\mathbb E_{(cs)}[Y]=0,\qquad \forall c,$$
where by the notation that we have set up before $\mathbb E_{(cs)}[X]=\sum_{ab}p_s(ab|c)x_a=\sum_{a}p_s(a|c)x_a$, and similarly $\E_{(cs)}[Y] = \sum_b p_s(b|c) y_b$.
Then we must have
\begin{align}\E_{(cs)}[XY] \le \rho \sqrt{\E_{(cs)}[X^2] \E_{(cs)}[Y^2]}\label{eqn:omid-rho-cond}\end{align}
for all $c$.
 Using the joint convexity of $f(x,y)=\sqrt{xy}$ we have that 
\begin{align}
\mathbb E_{(s)}[XY]&\leq   \sum_{c}p_s(c)\rho\sqrt{\E_{(cs)}[X^2] \E_{(cs)}[Y^2]}\nonumber
\\&\leq   \rho\sqrt{\big(\sum_{c}p_s(c)\E_{(cs)}[X^2]\big)\big(\sum_{c}p_s(c)\E_{(cs)}[Y^2]\big)}\nonumber
\\&
= \rho\sqrt{\mathbb E_{(s)}[X^2]\mathbb E_{(s)}[Y^2]}.\label{eq:rho-conditional-zero-mean}
\end{align}
On the other hand, equation \eqref{eq:rho-conditional-zero-mean} implies equation \eqref{eqn:omid-rho-cond} by choosing $X$ and $Y$ to be zero whenever $C$ is not equal to some given $c$. Therefore \eqref{eq:rho-conditional-zero-mean} is a complete characterization of the conditional maximal correlation.

\vspace{.2in}\noindent
\textbf{Definitions of $\mathcal L_{A}$, $\mathcal L_{B}$, $\mathcal L_{A}^{\perp}$ and $\mathcal L_{B}^{\perp}$.}
Let $\mathcal L_{A}$ be the linear space of functions $X:\mathcal A \rightarrow \mathbb R$ such that $\E_{(s)}[X]$ is independent of $s$, i.e.,
\begin{align}\label{eqn:am25}
\mathcal L_{A}:=\{X:\mathcal A\rightarrow \mathbb R:\, \E_{(s)}[X]=\E_{(s')}[X],~ \forall s, s'\}.
\end{align}

Let $\mathcal L_{A}^{\perp}$ be the orthogonal complement of $\mathcal L_{A}$ with respect to the inner product $\langle \cdot, \cdot\rangle_*$, which is the inner product with respect to the uniform distribution, i.e.,
$$\mathcal L_A^\perp:=\{X:\mathcal A\rightarrow \mathbb R:\, \langle X, X'\rangle_*=0,~ \forall X'\in \mathcal L_A\}.$$
We define $\mathcal L_{B}$ and $\mathcal L_{B}^{\perp}$ similarly.

\begin{lemma}\label{lem:Delta} There is $\Delta>0$ such that for all vectors $X\in \mathcal L_{A}^{\perp}$ and $Y\in \mathcal L_{B}^{\perp}$ we have
$$\max_{s, s'} \E_{(s)}[X]-\E_{(s')}[X] \geq \Delta\|X\|_*,$$
and
$$\max_{s, s'} \E_{(s)}[Y]-\E_{(s')}[Y] \geq \Delta\|Y\|_*.$$

\end{lemma}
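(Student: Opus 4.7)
The proof will be a standard compactness argument applied to the finite-dimensional vector space $\mathbb R^{\mathcal A}$ equipped with the inner product $\langle\cdot,\cdot\rangle_*$. Since both bounds are symmetric in the roles of $A$ and $B$, I will establish the first inequality and then take $\Delta$ to be the minimum of the two constants produced.

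The plan is as follows. For $X\in\mathbb R^{\mathcal A}$ define
$$D(X) := \max_{s,s'\in\mathcal S}\bigl(\E_{(s)}[X]-\E_{(s')}[X]\bigr).$$
This is a continuous function of $X$ (it is the maximum of finitely many linear functionals), and it is positively homogeneous of degree one, i.e.\ $D(\lambda X)=\lambda D(X)$ for $\lambda\ge 0$. The desired bound $D(X)\ge \Delta\|X\|_*$ therefore reduces, by homogeneity, to showing that $D(X)$ has a strictly positive minimum on the unit sphere
$$\mathcal U_A := \{X\in\mathcal L_A^\perp:\ \|X\|_*=1\}.$$
Since $\mathcal L_A^\perp$ is a linear subspace of the finite dimensional space $\mathbb R^{\mathcal A}$, the sphere $\mathcal U_A$ is compact, and continuity of $D$ guarantees that the infimum is attained, say at some $X^*\in\mathcal U_A$.

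It remains to argue $D(X^*)>0$. The key point is that $\langle\cdot,\cdot\rangle_*$ is a genuine (positive definite) inner product on $\mathbb R^{\mathcal A}$ because the uniform distribution has full support on the finite set $\mathcal A$; consequently $\mathcal L_A^\perp\cap\mathcal L_A=\{0\}$. Thus $X^*\ne 0$ cannot lie in $\mathcal L_A$, which by the definition of $\mathcal L_A$ in~\eqref{eqn:am25} means that $\E_{(s)}[X^*]$ is not the same for every $s\in\mathcal S$; hence $D(X^*)>0$. Setting $\Delta_A:=D(X^*)$, homogeneity then gives
$$D(X)=\|X\|_*\cdot D\!\left(\tfrac{X}{\|X\|_*}\right)\ \ge\ \Delta_A\|X\|_*\qquad\text{for all nonzero }X\in\mathcal L_A^\perp,$$
and the inequality is trivial when $X=0$.

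Running the same argument for $Y\in\mathcal L_B^\perp$ produces a constant $\Delta_B>0$; taking $\Delta:=\min\{\Delta_A,\Delta_B\}>0$ gives the desired uniform bound in both inequalities. There is no real obstacle here: the only point deserving care is the verification that $\mathcal L_A^\perp\cap\mathcal L_A=\{0\}$, which is automatic once one notes that $\langle\cdot,\cdot\rangle_*$ is a positive definite inner product on $\mathbb R^{\mathcal A}$ (because $\mathcal A$ is finite and the uniform distribution places positive mass on every element). Note, importantly, that the lemma does \emph{not} require the standing hypothesis of Theorem~\ref{thm:main-distributed} on $\psi$: the conclusion is purely a finite-dimensional compactness fact about the pair $(\mathcal L_A,\mathcal L_A^\perp)$.
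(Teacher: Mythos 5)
Your proposal is correct and follows essentially the same route as the paper: restrict to the unit sphere of $\mathcal L_A^\perp$, use compactness and continuity of the max of finitely many linear functionals, and observe that the quantity vanishes only for $X\in\mathcal L_A$, which is impossible for a nonzero element of $\mathcal L_A^\perp$ since $\langle\cdot,\cdot\rangle_*$ is positive definite. Your write-up just makes the homogeneity step and the positive-definiteness point explicit, which the paper leaves implicit.
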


\begin{proof}
It suffices to show that
\begin{align*}
\max_{s, s'} \E_{(s)}[X]-\E_{(s')}[X] >0,
\end{align*}
for any $X\in \mathcal L_A^\perp$ with $\|X\|_*=1$.
The proof then follows from the compactness of the unit ball in $\mathcal L_{A}^{\perp}$. To show the above inequality note that the left hand side is always non-negative, and that it is zero if and only if $X\in \mathcal L_A$. But since $0\neq X\in \mathcal L_A^\perp$, it cannot be in $\mathcal L_A$. We are done.
\end{proof}

\vspace{.2in}
\noindent
\textbf{Definitions of $\mathcal L'_{A}$ and $\mathcal L'_{B}$.}
In Theorem~\ref{thm:main-distributed} we assume that there is no non-constant $\psi:\mathcal C\rightarrow \mathbb R$ such that $\E_{(s)}[\psi]$ is independent of $s$. To state this property in terms of our notations, let us define $\mathcal K_A$ be the set of functions $U: \mathcal A\rightarrow \mathbb R$ such that $U$ is determined by $C$, i.e.,
$$\mathcal K_A:=\{U: \mathcal A\rightarrow \mathbb R: \, U(a)=U(a'),~ \forall a, a' \text{ s.t. } C(a)=C(a')\}.$$
With abuse of notation for a function $U\in \mathcal K_A$ we may use $U(c)$ since $U$ is indeed a function of $C$.

Then the assumption of Theorem~\ref{thm:main-distributed} equivalently means that $\mathcal L_{A}\cap \mathcal K_A$ contains only constant functions, i.e.,
$$\mathcal L_A\cap \mathcal K_A=\{r\mathbf 1_A:\, r\in \mathbb R\}.$$

Let us define
\begin{align}
\mathcal L'_{A}:=\mathcal L_{A}\cap(\mathbf 1_{A})^{\perp},
\label{eqeq:aa1}
\end{align}
where $(\mathbf 1_{A})^{\perp}$ is computed with respect to the inner product $\langle \cdot, \cdot\rangle_*$.
Then the above condition implies that
\begin{align}
\mathcal L'_{A}\cap \mathcal K_A=\{0\}
\label{eqn:am5}.
\end{align}
We similarly define $\mathcal K_B$ and $\mathcal L'_{B}$ and have
$\mathcal L'_{B}\cap \mathcal K_B=\{0\}$.

\vspace{.2in}
\noindent
\textbf{Definitions of $\mathcal K_A^{\perp_s}$ and $\mathcal K_B^{\perp_s}$.}
Let $\mathcal K_A^{\perp_s}$ and $\mathcal K_B^{\perp_s}$ be the orthogonal complements of $\mathcal K_A$ and $\mathcal K_B$ respectively, with respect to the inner product $\langle \cdot, \cdot\rangle_{(s)}$.
We define $$\mathcal K_A^{\perp_s} = \{U': \mathcal A \rightarrow \mathbb R: \E_{(s)} [U' X] = 0, \ \forall X \in \mathcal K_A\},$$
and similarly we define $\mathcal K_B^{\perp_s}.$
Note that $\langle \cdot, \cdot \rangle_{(s)}$ is indeed an inner product because of assumption~\eqref{eq:psab-positive}. Then the above orthogonal complement is well-defined. 
Observe that
$$\mathcal K_A^{\perp_s}=\{U':\mathcal A\rightarrow \mathbb R:\, \E_{(cs)}[U']=0, ~\forall c \}.$$

We can write any function $X: \mathcal A \rightarrow \mathbb R$ as $X = U + U'$,
where $U \in \mathcal K_A$ and $U' \in \mathcal K_A^{\perp_s}$. Indeed, let
$$ U =
\E_{(C(a) s)} [X].$$
Then we have
\color{black}
$$\E_{(cs)}[U']=\E_{(cs)}[X-U]=\E_{(cs)}[X]-U(c)=0.$$
Therefore, by definition $U'\in \mathcal K_A^{\perp_s}$.

\begin{lemma} \label{lem:Delta-prime}
There is $\Delta'>0$ such that for any $X\in \mathcal L'_{A}$, $Y\in \mathcal L'_{B}$ and  $s\in \mathcal S$ we have
$$\|U'\|_{(s)}\geq \Delta'\|U\|_{(s)}, \qquad \quad \|V'\|_{(s)}\geq \Delta'\|V\|_{(s)},$$
where $U\in \mathcal K_A$ and $U'\in \mathcal K_A^{\perp_s}$ are such that $X=U + U'$. Functions $ V\in \mathcal K_B$ and $V'\in \mathcal K_B^{\perp_s}$ are defined similarly.

\end{lemma}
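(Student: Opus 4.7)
The plan is a compactness-and-contradiction argument that upgrades~\eqref{eqn:am5} into a quantitative statement. By symmetry, I will only treat the inequality $\|U'\|_{(s)} \geq \Delta' \|U\|_{(s)}$; the statement for $V, V'$ is identical after swapping $\mathcal A \leftrightarrow \mathcal B$, and the uniform $\Delta'$ is obtained by taking the smaller of the two constants produced in the two cases.

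The key qualitative observation is: for any $s \in \mathcal S$ and any $X \in \mathcal L'_A$, if the decomposition $X = U + U'$ (with $U \in \mathcal K_A$, $U' \in \mathcal K_A^{\perp_s}$) has $U' = 0$, then $U = 0$. Indeed $U' = 0$ forces $X = U \in \mathcal K_A$, so $X \in \mathcal L'_A \cap \mathcal K_A = \{0\}$ by~\eqref{eqn:am5}, hence $U = X = 0$. The lemma is simply a quantitative strengthening of this fact.

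Suppose, for contradiction, that no $\Delta' > 0$ works. Since the inequality is scale-invariant in $X$, I can extract a sequence $(X_n, s_n) \in \mathcal L'_A \times \mathcal S$ whose $\langle \cdot, \cdot \rangle_{(s_n)}$-decompositions $X_n = U_n + U'_n$ satisfy $\|U_n\|_{(s_n)} = 1$ and $\|U'_n\|_{(s_n)} \to 0$. Finiteness of $\mathcal S$ lets me pass to a subsequence with $s_n = s$ constant. The assumption $p_s(a) > 0$ from~\eqref{eq:psab-positive} makes $\langle \cdot, \cdot \rangle_{(s)}$ a genuine inner product, so orthogonality of the decomposition gives $\|X_n\|_{(s)}^2 = 1 + \|U'_n\|_{(s)}^2$, which is uniformly bounded. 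Since $\mathcal L'_A$ is finite-dimensional, a further subsequence satisfies $X_n \to X_\infty$ in $\mathcal L'_A$ (which is closed, being a linear subspace).

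Because the $\langle \cdot, \cdot \rangle_{(s)}$-orthogonal projection onto $\mathcal K_A$ is a continuous linear map on this finite-dimensional space, $U_n \to U_\infty$ and $U'_n \to U'_\infty$ with $X_\infty = U_\infty + U'_\infty$, $U_\infty \in \mathcal K_A$ and $U'_\infty \in \mathcal K_A^{\perp_s}$. Then $\|U'_n\|_{(s)} \to 0$ forces $U'_\infty = 0$, so the qualitative observation above yields $U_\infty = 0$; but $\|U_\infty\|_{(s)} = \lim_n \|U_n\|_{(s)} = 1$, contradiction. There is no genuinely hard step here: the only thing worth verifying carefully is that holding $s$ fixed along the extracted subsequence makes the $s$-dependent projection and norm genuinely continuous on $\mathcal L'_A$, which is immediate once we know $\langle \cdot, \cdot\rangle_{(s)}$ is a positive-definite inner product by~\eqref{eq:psab-positive}.
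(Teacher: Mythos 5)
Your proof is correct and is essentially the paper's argument: both turn the qualitative fact $\mathcal L'_A\cap\mathcal K_A=\{0\}$ from~\eqref{eqn:am5} into a quantitative bound via compactness in finite dimensions, using finiteness of $\mathcal S$ for uniformity and~\eqref{eq:psab-positive} for positive-definiteness of $\langle\cdot,\cdot\rangle_{(s)}$. The paper phrases it as the continuous ratio $\|U\|_{(s)}/\|U'\|_{(s)}$ attaining a finite maximum on the unit sphere of $\mathcal L'_A$, while you use an equivalent sequential-compactness contradiction; the difference is only presentational.
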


\begin{proof} Without loss of generality, we can restrict to $X\in \mathcal L'_{A}$ where $\|X\|_{(s)}=1$.
Using~\eqref{eqn:am5} we have $U'\neq 0$ for any such $X$. Thus $\|U\|_{(s)}/\|U'\|_{(s)}$ is well defined and continuous as a function on the unit sphere of $\mathcal L'_A$. Therefore, it achieves its maximum. Let $M_s<\infty$ be the maximum of $\|U\|_{(s)}/\|U'\|_{(s)}$ and $\|V\|_{(s)}/\|V'\|_{(s)}$ over the unit balls of $\mathcal L'_{A}$ and $\mathcal L'_{B}$. Then the the choice of $\Delta'=\min_s(1/M_s)$ works.
\end{proof}

Now we have all the required tools to prove Theorem~\ref{lem:f-ineq}.

\subsection{Proof of Theorem~\ref{lem:f-ineq}}
First note that $f(x, y, z)$ is monotonically increasing in its first and second arguments on $[0,1]$ and monotonically decreasing in its third argument. For instance, the derivative with respect to $y$ is $M+2z-2(1-\epsilon)x$ which is non-negative for $x, z\in[0,1]$ since $M\geq 2$. Therefore,  we have
\begin{align*}
f( x,  y,  z)&=f(\max_{s_1}\E_{(s_1)}[X], \max_{s_2}\E_{(s_2)}[Y], \min_{s_3}\E_{(s_1)}[Z])
\\&=\max_{s_1, s_2, s_3} f\big(\E_{(s_1)}[X], \E_{(s_2)}[Y], \E_{(s_3)}[Z]\big).
\end{align*}
To prove the theorem, we thus need to show that
$$g(X, Y, Z):=\max_{s, s_1, s_2, s_3} \bigg(f\big(\E_{(s_1)}[X], \E_{(s_2)}[Y], \E_{(s_3)}[Z]\big)-\E_{(s)}[f(X, Y, Z)]\bigg)\geq 0.$$

Let $X=X'+X''$ where $X'\in \mathcal L_{A}$ and $X''\in \mathcal L_{A}^\perp$. Therefore using~\eqref{eqn:am25} we have
\begin{align}
\mathbb E_{(s)}[X']=\mathbb E_{(s_1)}[X']\qquad \forall s,s_1.\label{eqn;am27}\end{align}
Similarly let $Y=Y'+Y''$ where $Y'\in \mathcal L_{B}$ and $Y''\in \mathcal L_{B}^\perp$. Assume without loss of generality that
\begin{align}
\|X''\|_*\geq \|Y''\|_*.
\label{eqn:am21}
\end{align}
 We now compute
\begin{align}
g(X, Y, Z)&\geq\max_{s, s_1}\, f\big(\E_{(s_1)}[X], \E_{(s)}[Y], \E_{(s)}[Z]\big)-\E_{(s)}[f(X, Y, Z)]\nonumber\\
&=\max_{s, s_1}\, \bigg(M(\E_{(s_1)}[X]+\E_{(s)}[Y])-2(M+\epsilon)\E_{(s)}[Z]+2\E_{(s_1)}[X]\E_{(s)}[Y] \nonumber
\\&\qquad\qquad\qquad -(1-\epsilon)(\E_{(s_1)}[X]^2+\E_{(s)}[Y]^2)\nonumber
\\&\qquad\qquad -\E_{(s)}\big[M(X+ Y)-2(M+\epsilon)Z+2X Y -(1-\epsilon)(X^2+ Y^2)\big]\bigg)\nonumber
\\
&= \max_{s, s_1}\,  M\Big(\E_{(s_1)}[X] - \E_{(s)}[X]\Big) + 2\Big( \E_{(s_1)}[X]\E_{(s)}[Y] - \E_{(s)}[XY]  \Big)\nonumber\\
&\qquad\qquad -(1-\epsilon) \Big( \E_{(s_1)}[X]^2-  \E_{(s)}[X^2]  + \E_{(s)}[Y]^2-  \E_{(s)}[Y^2]\Big)\nonumber\\
&= \max_{s, s_1}\,  M\Big(\E_{(s_1)}[X''] - \E_{(s)}[X'']\Big) + 2\Big( \E_{(s_1)}[X]\E_{(s)}[Y] - \E_{(s)}[XY]  \Big)\nonumber\\
&\qquad\qquad -(1-\epsilon) \Big( \E_{(s_1)}[X]^2-  \E_{(s)}[X^2]  + \E_{(s)}[Y]^2-  \E_{(s)}[Y^2]\Big),\label{eqn:am29}
\end{align}
where in~\eqref{eqn:am29} we use~\eqref{eqn;am27} and the fact that $X=X'+X''$.
By Lemma~\ref{lem:Delta} there are $s, s_1$ such that
\begin{align}\E_{(s_1)}[X''] - \E_{(s)}[X'']\geq \Delta\|X''\|_*.
\label{eqn:anb1}\end{align}
From now on we fix $s, s_1$ to be the ones that achieve the above inequality. By this choice we obtain a lower bound on the first term of~\eqref{eqn:am29}.
\begin{align*}
g(X, Y, Z)& \geq  M\|X''\|_* + 2\Big( \E_{(s_1)}[X]\E_{(s)}[Y] - \E_{(s)}[XY]  \Big)\nonumber\\
&\qquad -(1-\epsilon) \Big( \E_{(s_1)}[X]^2-  \E_{(s)}[X^2]  + \E_{(s)}[Y]^2-  \E_{(s)}[Y^2]\Big).
\end{align*}
To bound the second term of~\eqref{eqn:am29}, we use $X=X'+X''$ and $Y=Y'+Y''$ to write
\begin{align}
\E_{(s_1)}[X]\E_{(s)}[Y]&= \E_{(s_1)}[X']\E_{(s)}[Y'] +  \E_{(s_1)}[X']\E_{(s)}[Y''] +  \E_{(s_1)}[X'']\E_{(s)}[Y'] +  \E_{(s_1)}[X'']\E_{(s)}[Y'']\nonumber\\
&\geq \E_{(s_1)}[X']\E_{(s)}[Y'] -x'_{\max}y''_{\max} -  x''_{\max}y'_{\max} - x''_{\max}y''_{\max},\label{eqn:ab2}
\end{align}
where $x'_{\max}=\max_{a} |X'(a)|$, and $x''_{\max}, y'_{\max}$ and $y''_{\max}$ are defined similarly. Now note that
$\|X\|^2_*=\|X'\|^2_* + \|X''\|^2_*$. Moreover, $\|X\|^2_*\leq 1$ since $X(a)\in [0,1]$ for all $a$. This implies that $x'_{\max}, x''_{\max}\leq  |\sqrt{\mathcal A}|$, and similarly $y'_{\max}, y''_{\max}\leq \sqrt{|\mathcal B|}$. We also have
$$\max \Big\{\frac{1}{\sqrt{|\mathcal A|}}x''_{\max}, \frac{1}{\sqrt{|\mathcal B|}}y''_{\max}\Big\}\leq \max\{\|X''\|_*, \|Y''\|_*\}= \|X''\|_*,$$
where here we use~\eqref{eqn:am21}.
We can then use these inequalities in~\eqref{eqn:ab2} to obtain
\begin{align}
\E_{(s_1)}[X]\E_{(s)}[Y]&\geq \E_{(s_1)}[X']\E_{(s)}[Y'] -3\sqrt{|\mathcal A||\mathcal B|}\cdot\|X''\|_*.
\label{eqn:anb2}
\end{align}
By the same analysis on $\E_{(s)}[XY]$ we get
\begin{align*}
\E_{(s)}[XY]&\leq \E_{(s)}[X'Y']+x'_{\max}y''_{\max} +  x''_{\max}y'_{\max} + x''_{\max}y''_{\max}\\
&\leq  \E_{(s)}[X'Y']+3\sqrt{|\mathcal A||\mathcal B|}\cdot \|X''\|_*.
\end{align*}
As a result,
$$ \E_{(s_1)}[X]\E_{(s)}[Y] - \E_{(s)}[XY] \geq  \E_{(s_1)}[X']\E_{(s)}[Y'] - \E_{(s)}[X'Y'] -6\sqrt{|\mathcal A||\mathcal B|}\cdot\|X''\|_*.$$
Applying the same lines of inequalities for the other terms we obtain
\begin{align}\E_{(s_1)}[X]^2-  \E_{(s)}[X^2]\leq \E_{(s_1)}[X']^2-  \E_{(s)}[X'^2]+6|\mathcal A|\cdot\|X''\|_*,
\label{eqn:anb3}
\end{align}
and
\begin{align}
\E_{(s)}[Y]^2-  \E_{(s)}[Y^2]\leq \E_{(s)}[Y']^2-  \E_{(s)}[Y'^2]+6|\mathcal B|\cdot\|X''\|_*.
\label{eqn:anb4}
\end{align}
Putting equations \eqref{eqn:anb1}, \eqref{eqn:anb2},  \eqref{eqn:anb3} and  \eqref{eqn:anb4} together we obtain
\begin{align*}
g(X, Y, Z)&\geq M\Delta\|X''\|_* + 2\Big(   \E_{(s_1)}[X']\E_{(s)}[Y'] - \E_{(s)}[X'Y'] -6|\mathcal A||\mathcal B|\cdot\|X''\|_* \Big) \\
&\qquad-(1-\epsilon)\Big(  \big( \E_{(s_1)}[X']^2-  \E_{(s)}[X'^2]\big) + \big(  \E_{(s)}[Y']^2-  \E_{(s)}[Y'^2]\big) +12|\mathcal A||\mathcal B|\cdot\|X''\|_*\Big)\\
&\geq (M\Delta-24|\mathcal A||\mathcal B|)\|X''\|_* + 2\Big(   \E_{(s_1)}[X']\E_{(s)}[Y'] - \E_{(s)}[X'Y']  \Big) \\
&\qquad-(1-\epsilon)\Big(   \E_{(s_1)}[X']^2-  \E_{(s)}[X'^2] +   \E_{(s)}[Y']^2-  \E_{(s)}[Y'^2] \Big)\\
& \geq 2\Big(   \E_{(s)}[X']\E_{(s)}[Y'] - \E_{(s)}[X'Y']  \Big)
\\&\qquad-(1-\epsilon)\Big(   \E_{(s)}[X']^2-  \E_{(s)}[X'^2] +  \E_{(s)}[Y']^2-  \E_{(s)}[Y'^2] \Big),
\end{align*}
where in the last line we use~\eqref{eqn;am27} and the fact that $M\Delta-24|\mathcal A||\mathcal B|\geq 0$.

Let
\begin{align}h(X', Y')&= 2\Big(   \E_{(s)}[X']\E_{(s)}[Y'] - \E_{(s)}[X'Y']  \Big) \nonumber
\\&\qquad-(1-\epsilon)\Big(   \E_{(s)}[X']^2-  \E_{(s)}[X'^2] +  \E_{(s)}[Y']^2-  \E_{(s)}[Y'^2] \Big),\label{eqn:hvw}
\end{align}
Then it suffices to show that $h(X', Y')\geq 0$ for every $X'\in \mathcal L_A$ and $Y'\in \mathcal L_B$. By a simple algebra for every $r, t\in \mathbb R$ we verify that
$$h(X'+r\mathbf 1_{A}, Y'+t\mathbf 1_{B})=h(X', Y').$$
This means that with no loss of generality we may assume that $X'\in\mathcal L'_{A}=\mathcal L_A\cap (\mathbf 1_A)^\perp$ and  $Y'\in \mathcal L'_{B}=\mathcal L_B\cap (\mathbf 1_A)^\perp$.

Let $U\in   \mathcal K_A$ and $U'\in \mathcal K_A^{\perp_s}$ such that $X'=U+U'$. Similarly let $Y'=V+V'$ where $V\in  \mathcal K_B$ and $V'\in \mathcal K_B^{\perp_s}$.
Since $U\in  \mathcal K_A$, its values can be denoted by $u_{c}$. We similarly denote the values of $V$ by $v_c$.
Therefore, we have
\begin{align}\label{eq:c-zero-mean}
\E_{(cs)}[U']=\E_{(cs)}[V']=0 \qquad\qquad \forall c.
\end{align}
Thus by the characterization of $\rho=\rho(A, B|CS)$ given in~\eqref{eq:rho-conditional-zero-mean} we have
\begin{align}\label{eq:rho-mean}
\E_{(s)}[U'V']\leq \rho\sqrt{\E_{(s)}[U'^2]\E_{(s)}[V'^2]}\leq \frac{\rho}{2}\Big(\E_{(s)}[U'^2] + \E_{(s)}[V'^2]\Big).
\end{align}
Further~\eqref{eq:c-zero-mean} implies that $\E_{(s)}[U']=\E_{(s)}[V']=0$ and then
\begin{align}\E_{(s)}[X']=\E_{(s)}[U], \qquad \E_{(s)}[Y']=\E_{(s)}[V].\label{eqn;amn1av}
\end{align}
 Moreover using equation \eqref{eq:c-zero-mean} we find that
\begin{align}\E_{(s)}[X'Y']&=\E_{(s)}[UV] + \E_{(s)}[UV'] + \E_{(s)}[U'V] +  \E_{(s)}[U'V']\nonumber
\\&=\E_{(s)}[UV] + \sum_{c}p_s(c)u_c\E_{(cs)}[V'] + \sum_{c}p_s(c)v_c\E_{(cs)}[U'] +  \E_{(s)}[U'V']\nonumber
\\&=\E_{(s)}[UV] + \E_{(s)}[U'V'].\label{eqn;amn2av}
\end{align}
A similar argument shows that
\begin{align}\E_{(s)}[X'^2]&= \E_{(s)}[U^2]+ \E_{(s)}[U'^2],\label{eqn;amn3av}
\\
\E_{(s)}[Y'^2]&= \E_{(s)}[V^2]+ \E_{(s)}[V'^2].\label{eqn;amn4av}
\end{align}

Using equations \eqref{eqn;amn1av}-\eqref{eqn;amn4av}, we compute a lower bound for $h(X', Y')$.
\begin{align*}
h(X', Y')& \geq 2\Big(   \E_{(s)}[U]\E_{(s)}[V] -\E_{(s)}[UV] - \E_{(s)}[U'V']\Big) \\
&\qquad  -(1-\epsilon)\Big(  \E_{(s)}[U]^2 - \E_{(s)}[U^2]-\E_{(s)}[U'^2]   +\E_{(s)}[V]^2 - \E_{(s)}[V^2]-\E_{(s)}[V'^2] \Big).
\end{align*}
Using~\eqref{eq:rho-mean} we continue
\begin{align*}
h(X', Y')& \geq \Big(  2\E_{(s)}[U]\E_{(s)}[V] -2\E_{(s)}[UV] + \E_{(s)}[U^2]- \E_{(s)}[U]^2 + \E_{(s)}[V^2]- \E_{(s)}[V]^2    \Big) \\
&\qquad + \epsilon\Big(  \E_{(s)}[U]^2- \E_{(s)}[U^2] + \E_{(s)}[V]^2- \E_{(s)}[V^2]       \Big) -2\E_{(s)}[U'V'] 
\\ &\qquad+ (1-\epsilon)\Big(  \E_{(s)}[U'^2] +  \E_{(s)}[V'^2]\Big)\\
 &\geq \Big(  2\E_{(s)}[U]\E_{(s)}[V] -2\E_{(s)}[UV] + \E_{(s)}[U^2]- \E_{(s)}[U]^2 + \E_{(s)}[V^2]- \E_{(s)}[V]^2    \Big) \\
&\qquad + \epsilon\Big(  \E_{(s)}[U]^2- \E_{(s)}[U^2] + \E_{(s)}[V]^2- \E_{(s)}[V^2]       \Big) + (1-\epsilon-\rho)\Big(  \E_{(s)}[U'^2] +  \E_{(s)}[V'^2]\Big)\\
&=  \E_{(s)}\Big[\big((U-\E_{(s)}[U])-(V-\E_{(s)}[V])\big)^2\Big]
 \\
&\qquad + \epsilon\Big(  \E_{(s)}[U]^2- \E_{(s)}[U^2] + \E_{(s)}[V]^2- \E_{(s)}[V^2]       \Big)+ (1-\epsilon-\rho)\Big(  \E_{(s)}[U'^2] +  \E_{(s)}[V'^2]\Big)\\
& \geq \epsilon\Big(  \E_{(s)}[U]^2- \E_{(s)}[U^2] + \E_{(s)}[V]^2- \E_{(s)}[V^2]       \Big) + (1-\epsilon-\rho)\Big(  \E_{(s)}[U'^2] +  \E_{(s)}[V'^2]\Big)\\
&\geq  -\epsilon\Big( \E_{(s)}[U^2] +\E_{(s)}[V^2]       \Big) + (1-\epsilon-\rho)\Big(  \E_{(s)}[U'^2] +  \E_{(s)}[V'^2]\Big)\\
&=-\epsilon\big(   \|U\|_{(s)}^2 +\|V\|^2_{(s)}  \big) + (1-\epsilon-\rho)\big(\|U'\|^2_{(s)}+\|V'\|_{(s)}^2\big).
\end{align*}
Now using Lemma~\ref{lem:Delta-prime} we have
$$\|U'\|_{(s)}\geq \Delta'\|U\|_{(s)}, \qquad \qquad \|V'\|_{(s)}\geq \Delta'\|V\|_{(s)}.$$
Hence,
\begin{align*}
h(X', Y')
&\geq-\epsilon\big(   \|U\|_{(s)}^2 +\|V\|^2_{(s)}  \big) + (1-\epsilon-\rho)\Delta'\big(\|U\|^2_{(s)}+\|V\|_{(s)}^2\big)\\
&=(\Delta' (1-\rho)-(1+\Delta')\epsilon)\big(\|U'\|^2_s+\|V'\|_{(s)}^2\big)\\
&\geq 0.
\end{align*}
These inequalities hold since $\epsilon\leq \Delta' (1-\rho)/(1+\Delta')$.

\end{document}